\newif\iflipics
\newcommand{\mdnote}[1]{\todo[]{Mike's Note: #1}}
\newtheorem{lemma}{Lemma}[section]
\newtheorem{theorem}[lemma]{Theorem}
\newtheorem{claim}[lemma]{Claim}
\newtheorem{definition}[lemma]{Definition}
\newtheorem{corollary}[lemma]{Corollary}
\newtheorem{obs}[lemma]{Observation}
\DeclareMathOperator*{\E}{\mathbb{E}}
\DeclareMathOperator*\union{\bigcup}
\newcommand{\R}{\mathbb{R}}
\newcommand{\Z}{\mathbb{Z}}
\newcommand{\calP}{{\mathcal{P}}}
\newcommand{\calQ}{{\mathcal{Q}}}
\newcommand{\poly}{\mathrm{poly}}
\newcommand{\bfe}{{\mathbf{e}}}
\newcommand{\bfr}{{\mathbf{r}}}
\newcommand{\bfE}{{\mathbf{E}}}
\newcommand{\bfV}{{\mathbf{V}}}
\newcommand{\bfT}{{\mathbf{T}}}
\newcommand{\cost}{{\mathrm{cost}}}
\newcommand{\opt}{{\mathrm{opt}}}
\newcommand{\tw}{{\mathrm{tw}}}
\newcommand{\CC}{{\mathrm{CC}}}
\newcommand{\db}{{\mathrm{db}}}
\mathchardef\mhyphen="2D
\title{Degrees and Network Design: New Problems and Approximations}
\author{Michael Dinitz}{Johns Hopkins University}{mdinitz@cs.jhu.edu}{}{}
\author{Guy Kortsarz}{Rutgers University, Camden}{guyk@camden.rutgers.edu}{}{}
\author{Shi Li}{University at Buffalo}{shil@buffalo.edu}{}{}
\keywords{Network Design, Degrees}
\author{Michael Dinitz\thanks{Supported in part by NSF grants CCF-1909111 and CCF-2228995.}\\Johns Hopkins University\\ \texttt{mdinitz@cs.jhu.edu} \and Guy Kortsarz\\Rutgers University, Camden\\ \texttt{guyk@camden.rutgers.edu} \and Shi Li\thanks{Supported in part by NSF grant CCF-1844890.}\\University at Buffalo\\ \texttt{shil@buffalo.edu}}
\date{}
\begin{document}
\maketitle

\begin{abstract}
While much of network design focuses mostly on cost (number or weight of edges), node degrees have also played an important role.  They have traditionally either appeared as an objective,  to minimize the maximum degree (e.g., the Minimum Degree Spanning Tree problem), or as constraints which might be violated to give bicriteria approximations (e.g., the Minimum Cost Degree Bounded Spanning Tree problem).  We extend the study of degrees in network design in two ways.  First, we introduce and study a new variant of the Survivable Network Design Problem where in addition to the traditional objective of minimizing the cost of the chosen edges, we add a constraint that the $\ell_p$-norm of the  node degree vector is bounded by an input parameter.  This interpolates between the classical settings of maximum degree (the $\ell_{\infty}$-norm) and the number of edges (the $\ell_1$-degree), and has natural applications in distributed systems and VLSI design. We give  a constant  bicriteria approximation in both measures using convex  programming.
Second, we provide a polylogrithmic  bicriteria approximation for the Degree Bounded Group Steiner problem  on bounded treewidth graphs,  solving an open problem from
\cite{guy1} and \cite{GKL22}.
\end{abstract}

\section{Introduction}
The overarching theme of network design problems is to find ``inexpensive'' subgraphs that satisfy some type of connectivity constraints.  The notion of ``inexpensive'' is often either the number of edges (unweighted cost) or the sum of edge costs (weighted cost).  However, it has long been recognized that in many applications \emph{vertex degrees} matter as much (or more) than cost.  This is particularly true in the context of networking and distributed systems, where the degree of a node often corresponds to the ``load'' on that node, as well as in VLSI design. So there has been a significant amount of work on handling degrees, either instead of or in addition to cost, which has led to many seminal papers and results.  With degrees as an objective, these include the well known local search approach of F\"urer and Raghavachari~\cite{FR94} for the Minimum Degree Spanning Tree problem and the Minimum Degree Steiner Tree problem.  With degrees as a constraint, these include the iterative rounding \cite{jain1} approach of Singh and Lau~\cite{SL15} for the Minimum-Cost Bounded-Degree Spanning Tree problem, as well as many extensions (most notably to Survivable Network Design with degree bounds~\cite{LNS09}, but see~\cite{ravi1} for many other examples).

In this paper we extend the study of degrees in network design in two ways.  First, we introduce what is (to the best of our knowledge) a new class of problems. Instead of bounding the cost and individual degrees as in \cite{SL15,LNS09},  
 our objective is to obtain minimum cost while satisfying a bound on the $\ell_p$-norm of the node degree vector.  This interpolates between the maximum degree (the $\ell_{\infty}$-norm) and the total number of edges or unweighted cost (the $\ell_1$-degree).  Second, we solve  a well known  open  problem: We give a poly-logarithmic bicriteria approximation for the Group Steiner Tree problem  with degree bounds on  bounded treewidth graphs.

\subparagraph{$\ell_p$-Objective.}
While the maximum degree is often a reasonable objective, as is minimizing the total cost (either with or without degree bounds), there are many natural situations where none of these approaches are fully satisfactory.  If we simply ignore the degrees and focus on cost (weighted or unweighted), then we might end up with a solution with highly imbalanced degrees, leading to large load at particular nodes.  If we ignore costs and simply optimize the maximum degree, then we might return a solution with far more edges than are needed: if the structure of the graph forces some node to have large degree, then if we simply try to minimize the maximum degree we will not even try to make the degrees of other nodes small.  Finally, optimizing under individual degree bounds implicitly assumes that nodes ``really have'' these degree bounds, i.e., they come from some external constraint.  But this is of course not always the case: often we do not have real bounds on individual nodes, but rather a more vague desire to ``keep degrees small''.

Hence we want some way of making sure that the maximum degree is small, but also encouraging few edges.  A natural function that simultaneously accomplishes both of these goals is the $\ell_p$-norm of the degree vector, i.e., the function $\left( \sum_{v \in V} (\deg_v)^p \right)^{1/p}$ for $p \geq 1$ (and in particular for $p =2$), where $\deg_v$ is the degree of $v$ in the output subgraph.  When $p=1$ this is simply (twice) the number of edges (i.e., the unweighted cost), and when $p = \infty$ this is the maximum degree.  But for intermediate values of $p$, it discourages very large degrees (in particular the maximum degree) since $p>1$ implies that large degrees have a larger effect on the norm than smaller degrees, while still also being effected on a non-trivial way by the smaller degrees.  So we can either use the $\ell_p$-norm as an objective function, or we can use it as a constraint that is far more flexible than having simple degree constraints at every node.  

This intuition, that the $\ell_p$-norm takes into account both the maximum and the distribution simultaneously, is one reason why the $\ell_p$-norm has been an important objective function 
for combinatorial problems. For example, the Set Cover problem was studied under the $\ell_p$ norm of the vector of number of elements assigned to each set \cite{anu1}. It was also extensively studied in scheduling problems (see for example \cite{azar1, azar2, arvind1, IL23}).  To the best of our knowledge, the  $\ell_p$ norm has not been studied in the context of network design,  with the notable recent exception of \emph{graph spanners}~\cite{CDR19,CDR19-approx}, where the direct applications of spanners to distributed systems led to exactly this motivation.  Similarly, MST with  $\ell_p$ norm is very important for VLSI design, since in many such settings we are forced to use spanning trees and hence the number of edges is fixed. So minimizing the $\ell_p$ norm will likely derive a balanced degree vector which is of key importance  for these VLSI application  (see, for example,  \cite{china,vlsi1}).

Motivated by the above discussion, we introduce and give the first approximation for the Survivable Network design problem with low cost under a bound on the $\ell_p$ norm of the degree vector.  

\subparagraph{Group Steiner Tree with Degree Bounds.}
In addition to  the study of  $\ell_p$-norm problems, we also make significant progress on a known open problem: approximating Group Steiner Tree  with degree bounds on bounded treewidth graphs.  The Group Steiner Tree problem (without degree bounds) is a classical optimization problem \cite{GKR1} which has played a central role in network design.  In this problem there is a designated root node $r$, and a collection of (not necessarily disjoint) \emph{groups} of vertices.  The goal is to find a subtree which connects at least one vertex from each group to $r$, while minimizing the total cost of all edges in the subtree.   The Degree Bounded Group Steiner problem was 
first raised by 
Hajiaghayi in \cite{MH} (in the 8th Workshop on Flexible Network Design), motivated by the online version of the problem and applications to VLSI design.  In particular, while low cost is highly desirable, this cost is payed only once, while later the VLSI circuit is applied (evaluated) constantly.  Low degrees
imply that the computation of the value of the circuit can be done faster.  See a discussion of why low degrees are important for Group Steiner in \cite{guy1}.

Unfortunately, despite significant recent interest in this problem~\cite{GKL22,guy1}, progress has been elusive.  In particular, polylogarithmic bicriteria approximations were not even known for simple classes such as \emph{series-parallel} graphs, i.e., for graphs with treewidth $2$.  We go far beyond series-parallel graphs, and give results for bounded treewidth graphs.

\subsection{Our Results and Techniques}

We begin in Section~\ref{sec:lp} with a study of the $\ell_p$-Survivable Network Design problem. We are given the input graph $G = (V, E)$, with edge costs $c \in \R_{\geq 0}^E$.  
    There is a connection requirement vector $r \in \Z_{\geq 0}^{\binom{V}{2}}$, a number $p \geq 1$ and a bound $A$ on the $\ell_p$ norm of the degree vector of the output graph.  The goal of the problem is to find the minimum-cost subgraph $H$ of $G$ satisfying the following:
	\begin{itemize}
		\item {\bf (connection requirements)} for every $u, v \in V$ with $u \neq v$, there are at least $r_{u, v}$ edge disjoint paths between $u$ and $v$ in $H$, and
		\item {\bf (degree constraint)} $\left(\sum_{v \in V}d_H^p(v)\right)^{1/p} \leq A$, where $d_H(v)$ is the degree of $v$ in $H$.
	\end{itemize}

We assume the input instance is feasible; that is, there is a valid sub-graph $H$ satisfying both requirements. Let $\opt$ be the minimum cost of a valid subgraph $H$. 
The main theorem we prove for the problem is the following:
\begin{theorem}
\label{thm:network-design}
There is a (randomized) algorithm which, given an instance of \textsc{$\ell_p$-Survivable Network Design}, outputs a subgraph $H$ satisfying the connection requirements and which has the following properties.
\begin{itemize}
\item The expected cost of $H$ is at most $2 \cdot \opt$.
\item The expectation of the $\ell_p$-norm of the degree vector  is at most $2^{1/p} 5^{1-1/p} \cdot A$.
\end{itemize}
\end{theorem}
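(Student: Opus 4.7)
The approach combines a convex programming relaxation with a careful iterative rounding.

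\emph{Convex relaxation.} I would formulate the natural relaxation with variables $x_e \in [0,1]$, the standard SNDP cut-covering constraints $\sum_{e \in \delta(S)} x_e \geq \max_{u \in S,\, v \notin S} r_{u,v}$ for every nonempty proper $S \subsetneq V$, together with the convex constraint $\bigl(\sum_{v \in V} (\sum_{e \ni v} x_e)^p\bigr)^{1/p} \leq A$. Both the objective and the feasible region are convex, so this is solvable via the ellipsoid method using a min-cut separation oracle for the cut constraints and a subgradient for the $\ell_p$ constraint. Let $x^*$ be an optimal solution; then $c \cdot x^* \leq \opt$ and the fractional degrees $d^*_v := \sum_{e \ni v} x^*_e$ satisfy $\|d^*\|_p \leq A$.

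\emph{Rounding.} Next I would round $x^*$ via a randomized variant of Jain's iterative rounding for SNDP. At each step one finds an extreme point of the residual cut polytope and identifies an edge with $x^*_e \geq 1/2$ (Jain's lemma, which survives the addition of a single $\ell_p$ side constraint that one can argue away before the iterative phase). Rounding such an edge to $1$ and recursing gives, by the standard argument, $\E[c(H)] \leq 2\, c \cdot x^* \leq 2\opt$. With randomized tie-breaking (or a swap/pipage-style variant that preserves the Jain cost guarantee), one also obtains the marginal property $\Pr[e \in H] \leq 2 x^*_e$ for every edge $e$, which is the key handle for the degree analysis.

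\emph{Degree analysis.} From the marginal bound, $\E[d_H(v)] \leq 2 d^*_v$. By Jensen's inequality,
\[
\E[\|d_H\|_p] \leq \Bigl(\sum_v \E[d_H(v)^p]\Bigr)^{1/p}.
\]
Since $d_H(v) = \sum_{e \ni v} \mathbf{1}[e \in H]$ is a sum of $\{0,1\}$-valued indicators, one expands $d_H(v)^p$ over $p$-tuples of incident edges and uses the marginal bound together with a negative-correlation-type property of the rounding to estimate $\E[d_H(v)^p]$. The target per-vertex estimate is $\E[d_H(v)^p] \leq 2 \cdot 5^{p-1} (d^*_v)^p$, which, summed over $v$, combined with $\|d^*\|_p \leq A$, and $p$-th-rooted, gives exactly $2^{1/p} 5^{1-1/p} A$.

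\emph{Main obstacle.} The sharp moment bound in the degree analysis is the hardest step. The factor $5$ in the $\ell_\infty$ limit matches the Lau--Naor--Salavatipour--Singh $2 B_v + 3$ bicriteria for SNDP with degree bounds (which is $\leq 5 B_v$ once $B_v \geq 1$), and the factor $2$ in the $\ell_1$ limit matches Jain's cost guarantee applied with unit edge costs; but interpolating them smoothly inside a single $\ell_p$ statement requires either a sharp negative-correlation bound for the rounded edge indicators at each vertex, or a two-regime case split based on whether $d^*_v$ is small (worst-case $O(1)$ bound) or large (concentration around $2 d^*_v$). The rest of the argument -- convex relaxation, existence of a $1/2$-valued edge at each extreme point, and the factor-$2$ cost bound -- is a direct extension of the standard Jain framework.
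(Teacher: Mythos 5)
Your proposal has the right overall shape (convex relaxation, randomized iterative rounding, per-vertex $p$-th moment bound), but it misses the two ideas that make the paper's argument work, and the step you flag as the ``main obstacle'' is in fact broken as stated. First, your relaxation imposes $\sum_v (d^*_v)^p \leq A^p$ with the plain $p$-th power, and your target estimate is $\E[d_H(v)^p] \leq 2\cdot 5^{p-1}(d^*_v)^p$. This per-vertex bound is false whenever $d^*_v < 1$: if $v$ has fractional degree $1/n$ and the rounding preserves marginals up to a factor $2$, then with probability $\Theta(1/n)$ some edge at $v$ is taken and $d_H(v)^p \geq 1$, so $\E[d_H(v)^p] = \Omega(1/n) \gg (1/n)^p$ for $p>1$. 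Summed over $n$ such vertices this destroys the bound; the paper notes that the relaxation with $f(x)=x^p$ has an enormous integrality gap. The fix in the paper is to replace $x^p$ by the convex function $f(x)=x$ for $x\in[0,1]$ and $f(x)=x^p$ for $x>1$ in the constraint $\sum_v f(x(\delta(v)))\leq A^p$ (still a valid relaxation, since integral degrees are $0$ or $\geq 1$), and to prove the matching rounding guarantee $\E[d_H(v)^p] \leq \alpha(\alpha+\beta)^{p-1} f(d^*_v)$ --- linear, not $p$-th power, in the small-degree regime.

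Second, your degree analysis rests on expanding $d_H(v)^p$ over $p$-tuples and invoking a ``negative-correlation-type property'' of Jain-style iterative rounding; no such property is known, and nothing in your rounding gives a worst-case cap on $d_H(v)$, without which the $p$-th moment cannot be controlled by marginals alone. The paper instead turns the fractional degrees into hard degree bounds $B_v=\max\{x(\delta(v)),1\}$ and runs iterative rounding inside the Lau--Naor--Salavatipour--Singh $(2,3)$-integral polytope for degree-bounded SNDP, choosing at each iteration a \emph{random} extreme point whose expectation equals the current point. This yields simultaneously a martingale bound $\E[d_H(v)]\leq \alpha d^*_v$ and an almost-sure bound $d_H(v)\leq \alpha d^*_v+\beta$ (or $\alpha+\beta$ when $d^*_v<1$); the elementary inequality $\E[Z^p]\leq \E[Z]\cdot M^{p-1}$ for $Z\in[0,M]$ then gives the moment bound with no correlation assumptions. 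You would need to incorporate both the modified constraint function and the degree-bounded polytope with its almost-sure guarantee to complete the proof.
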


For the special case of $\ell_p$-Spanning Tree problem, where $r_{uv}=1$ for all $u,v \in V$, we improve the expected cost to at most $\opt$ (rather than $2\cdot \opt$) and the expectation of the $\ell_p$-norm of the degree vector to at most $2^{1-1/p} \cdot A$.

Our main approach is to leverage the fact that the $\ell_p$-norm is \emph{convex}.  This allows us to write a convex relaxation for the problem, which can then be solved efficiently using standard convex programming techniques.  We then round this solution using an iterative rounding approach.  Making this work requires overcoming a number of issues, possibly the trickiest of which is handling fractional degrees that are less than $1$.  Note that a fractional solution could have many nodes with very small fractional degree (e.g., $1/n$).  Due to the structure of the $\ell_p$-norm, such small values contribute far less to the $\ell_p$-norm than they ``should'' (in an integral solution).  To get around this, we actually \emph{change} the $\ell_p$-constraint in a way that acts differently for values less than $1$, while still maintaining convexity.  With this change in place, we can solve the relaxation, interpret the fractional degrees ``as if'' they are true degree bounds, and then round using existing results on iterative rounding for degree-bounded network design.

\medskip
We then move to our second problem, Group Steiner Tree with Degree Bounds on bounded treewidth graphs.  In the problem, we are given a graph $G = (V, E)$ with treewidth $\tw$, a cost vector $c \in \R_{\geq 0}^E$, a root $r$, and $k$ sets $S_1, S_2, \cdots, S_k$.  We are additionally given a degree bound $\db_v \in \Z_{>0}$ for every $v \in V$. The goal of the problem is to choose a minimum-cost subgraph $H$ of $G$ such that for every $t \in k$, $H$ contains a path from $r$ to some vertex in $S_t$, and $d_H(v) \leq \db_v$ for every $v \in V$. By minimality, the optimum $H$ is always a tree. We solve an open problem from  \cite{GKL22} and \cite{guy1} by giving a polylogarthmic bicriteria algorithm as long as the treewidth is bounded.  In particular, we prove the following theorem.
\begin{theorem} \label{thm:GST-bd-treewidth}
    There is an $n^{O(\tw \log \tw)}$-time randomized algorithm for the Group Steiner Tree with Degree Bounds problem on bounded treewidth graphs which has $O(\log^2 n)$ approximation ratio and $O(\log^2 n)$-degree violation.
\end{theorem}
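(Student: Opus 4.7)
The plan is to extend the Garg--Konjevod--Ravi (GKR) LP-rounding framework for Group Steiner Tree, originally designed for trees, to a tree decomposition of a bounded-treewidth graph, while incorporating degree bounds through randomized rounding and concentration. First, I would fix a tree decomposition $(T, \{B_x\}_{x \in T})$ of $G$ with width $\tw$, and formulate an LP whose variables $z_{x,\pi}$ are indexed by each bag $x$ together with a ``connection pattern'' $\pi$, specifying which subsets of $B_x$ form connected components in the solution restricted to the subgraph ``below $x$'' and which of those components are connected to $r$. Since the number of patterns per bag is $\tw^{O(\tw)}$, the LP has $n^{O(\tw \log \tw)}$ variables and can be solved in time $n^{O(\tw \log \tw)}$, matching the claimed runtime. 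Alongside edge variables $x_e$, constraints enforce (i) consistency between patterns of adjacent bags on shared vertices, (ii) for every group $S_t$, total pattern weight at least $1$ on patterns that connect some vertex of $S_t$ to $r$, (iii) edge usage bounded by the patterns that use the edge, and (iv) the linear degree bound $\sum_{e \ni v} x_e \le \db_v$.

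To round, I would traverse $T$ top-down: at the root bag, sample a pattern proportional to the LP value; at each child bag, sample conditionally on agreement with the parent's pattern on shared vertices, proportional to the corresponding LP values. The trial includes exactly the edges ``used'' by the sampled patterns, so by linearity of expectation the trial has expected cost equal to the LP cost and expected degree at $v$ at most $\db_v$. A GKR-style analysis on the pattern-equipped bag-tree should then yield that each group $S_t$ is connected to $r$ with probability $\Omega(1/\log n)$ per trial; hence $O(\log^2 n)$ independent trials connect all groups with high probability, producing an expected cost of $O(\log^2 n) \cdot \opt$. Expected degrees grow to $O(\log^2 n) \cdot \db_v$, and a Chernoff-plus-union-bound argument (handling the regime $\db_v = O(1)$ by absorbing into the $O(\log^2 n)$ slack) gives $O(\log^2 n)$ degree violation with high probability.

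The main obstacle is carrying the GKR probabilistic analysis from a literal tree to the bag-tree with its enlarged state space. On a tree, GKR samples each edge independently conditioned on its parent's event, but on a tree decomposition a bag can contain multiple ``interface'' vertices connecting to multiple children, so the rounding must perform correlated conditional sampling of patterns; the $\tw^{O(\tw)}$-size pattern enumeration per bag is exactly what makes this tractable, reducing conditional sampling to a finite lookup table while preserving the edge marginals prescribed by the LP. A secondary obstacle is showing that the pattern-based LP has only an $O(\log n)$ integrality gap on the ``single-group'' sub-problem needed for the GKR key lemma, which is where the tree-decomposition structure is essential: the standard cut-based LP for GST has polynomial integrality gap even on simple graphs, and the pattern variables close this gap by locally certifying the connectivity structure of the solution at every bag rather than only through cuts.
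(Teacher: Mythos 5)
Your overall architecture matches the paper's: the paper also works over an $O(\log n)$-depth binary tree decomposition, attaches to each bag a label of size $\tw^{O(\tw)}$ encoding the chosen edges of the bag together with connectivity partitions of the bag's vertices (both ``from below'' and ``from above''), enforces consistency between a bag and its children, writes an LP over these labels, and performs GKR-style top-down conditional sampling, repeated $\Theta(\log^2 n)$ times. The covering analysis you sketch (probability $\Omega(1/\log n)$ per group per trial) is exactly the paper's Property~\ref{property:label-tree-covering}.

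There is, however, a genuine gap in your treatment of the cost and degree bounds. Within a \emph{single} trial, the degree of a vertex $v$ (and likewise the total edge cost) is a sum of contributions from many different bags containing $v$, and these contributions are \emph{not} independent: they are produced by correlated conditional sampling along the decomposition tree, so a plain Chernoff bound does not apply. Worse, your LP only imposes the global constraints $\sum_{e \ni v} x_e \le \db_v$ and the analogous global cost bound. With only a global constraint, the conditional expectation of the cost accumulated in the subtree below a sampled node $p$ can be as large as $w_p/x_p$ with $x_p$ tiny, so a single trial can have degree or cost that is polynomially larger than its unconditional expectation with non-negligible probability, and the union over $O(\log^2 n)$ trials and $n$ vertices fails. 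The paper closes this by adding the subtree-local budget constraints $\sum_{q \in \Lambda^*(p)} c^i_q x_q \leq x_p$ for \emph{every} node $p$ of the rounding tree (constraint \eqref{LPC:cost}), which is valid for the integral optimum, and then proving a recursive moment-generating-function bound (Lemma~\ref{lemma:bound-exp-mp}) showing $\E[\exp(s\cdot\cost^i)] \le 1+\frac{1}{D}$ for $s=\ln(1+\frac{1}{2D})$; Markov's inequality applied to this exponential moment, after taking the union of the $M=\Theta(\log^2 n)$ trials, is what yields the $O(\log^2 n)$ degree violation even when $\db_v=O(1)$. You would need to add these local constraints to your LP and replace ``Chernoff plus union bound'' with such an inductive exponential-moment argument (this is the adaptation of the technique of \cite{GKL22} that constitutes the main technical content of Section~\ref{sec:tree-labeling}).
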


In order to achieve this result, we introduce and study a ``tree labeling'' problem in Section~\ref{sec:tree-labeling}. There is a rooted full binary tree, and we need to give a label $\ell_u$ for each node $u$ in the tree from a subset $L_u$ of potential labels. For every internal node $u$ with two children $v$ and $v'$ there are some consistency constraints on the labels, which say that the triple $(\ell_u, \ell_v, \ell_{v'})$ must be from some given subset $\Gamma_u \in L_u \times L_v \times L_{v'}$. Then we have some covering constraints, each specified by a set $S$ of labels: the constraint requires that at least one node has its label in $S$.  Finally, we have many cost constraints. For each such constraint, a label is given a cost, and we require that the total cost of all labels used is at most 1.  For this problem we give a randomized algorithm that outputs a labeling that satisfies all consistency constraints, and approximately satisfies the covering and cost constraints with reasonable probability, assuming the given instance is feasible. It runs in polynomial time when the depth of the tree is $O(\log n)$ and each $L_u$ has $O(1)$-size. The main techniques of the algorithm are adaptations of the LP-rounding algorithm in \cite{GKL22} for their degree-bounded network design problem. We introduce the tree labeling problem as a host for these techniques, and adapt them for the problem.

We then show in Section~\ref{sec:reduction} that we can reduce Group Steiner Tree with Degree Bounds on bounded treewidth graphs to this tree labeling problem.  Let $\tw$ be the treewidth of the graph; it is known from \cite{Bod88} that we can assume the decomposition tree of $G$ is an $O(\log n)$-depth binary tree, with bag size $O(\tw)$. This decomposition tree will be the tree in the tree-labeling instance.  For each bag in the tree, a label will contain the set of edges we take from the bag, and some connectivity information on the vertices in the bag. We define the consistency constraints so that if they are satisfied, then the connectivity information is correct. A group being connected can be captured by a covering constraint in the tree labeling instance, and the edge cost constraint and degree constraints can be formulated as cost constraints in the instance. Using the algorithm for the tree labeling instance, we obtain a tree with small cost that satisfies degree bounds approximately, and connects a group with reasonable probability. The final output then is obtained by running the procedure many times and taking the union. 

\subsection{Other Related Work}
For the survivable network design problem without any degree constraints, the classic result of Jain \cite{jain1} gives a $2$-approximation algorithm using the iterative rounding method. 
In \cite{GKR1} an $O(\log^2 n)$ approximation is given for the Group Steiner problem on tree inputs,
and an $O(\log^3 n)$ for the Group Steiner problem (without degree constraints)
for general graphs. The approximation for trees 
is almost the  best possible, unless NP problems can be solved 
in quasi-polynomial time \cite{eran1}. \cite{GKL22} gave a bicriteria 
approximation for the Group Steiner Tree Problem with degree bounds on tree inputs, with approximation ratio $O(\log^2 n)$ and degree violation $O(\log n))$.  Both bounds are nearly optimal \cite{eran1, irit7}.
In \cite{DV} the authors gave 
an $O(\log^2 n)$-approximation 
ratio for Group Steiner problem on bounded treewidth graphs
(without degree bounds).
In \cite{guy1} an $O(\log^2 n)$ approximation is given for the Group Steiner problem  with minimum maximal degree, but without costs.

\subsection{Notation}
Given a graph $H$ and a vertex $v$ in $H$, we shall use $\delta_H(v)$ to denote the set of edges in $H$ incident to $v$, and $d_H(v) = |\delta_H(v)|$ to denote its degree.  Given a rooted tree $T$ and a vertex $v$ in $T$, we use $\Lambda_T(v)$ to denote the set of children of $v$ in $T$, and $\Lambda^*_T(v)$ to denote the set of descendants of $v$ in $T$ (including $v$ itself). When $H$ and $T$ are clear from the context, we shall omit them in the subscript.  For example, this happens when $H = G$ is the input graph.

For a real vector $z$ over some domain, and a subset $S$ of elements in the domain, we define $z(S):=\sum_{i \in S}z_i$ to denote the sum of $z$ values of elements in $S$.

\section{\texorpdfstring{$\ell_p$}{Lp}-Survivable Network Design} \label{sec:lp}
    In this section, we give our iterative rounding algorithm for $\ell_p$-survivable network design problem. Recall that we are given a graph $G = (V, E)$ with cost vector $c \in \R_{\geq 0}^E$, a connection requirement vector $r \in \Z_{\geq 0}^{{\binom{V}{2}}}$, and a bound $A$ on the $\ell_p$ norm of the degree vector.

	\begin{definition}
		We say a polytope $\calP \in [0, 1]^E$ is good if it is upward-closed \footnote{This means for every $x \in \calP$ and $x' \in [0, 1]^E$ with $x' \geq x$, we have $x' \in \calP$.} and the following holds: For every vector $x \in \{0, 1\}^E$, we have that $x \in \calP$ if and only if the graph $(V, \{e \in E: x_e = 1\})$ satisfies the connection requirements.
	\end{definition}
	Notice that the above definition does not capture the degree constraints.  This is done using the following definition.  For a real vector $B \in [1, \infty]^V$, we define $\calQ_B := \{x \in [0, 1]^E: \forall v \in V, x(\delta(v)) \leq B_v\}$ to be the set of all vectors satisfying the degree bounds defined by $B$. 
			
	\begin{definition}
		\label{defn:integral}
		Let $\alpha \geq 1$ and $\beta \geq 0$ be two real numbers and $\calP$ be a good polytope. We say $\calP$ is $(\alpha,\beta)$-integral if for every $B \in [1, \infty]^V$, every non-integral extreme point $x$ of $\calP \cap \calQ_B$ satisfies at least one of the following two properties:
		\begin{enumerate}[label=(\ref{defn:integral}\alph*), leftmargin=*]
			\item there exists an edge $e \in E$ with $1/\alpha \leq x_e < 1$,  \label{event:round}
			\item there exists a vertex $v \in V$ such that $x(\delta(v)) = B_v$ and $|\{e \in \delta(v): x_e > 0\}| \leq B_v + \beta$. \label{event:relax}
		\end{enumerate}
	\end{definition}
 
	It is well known that for Survivable Network Design there is a $(2, 3)$-integral polytope $\calP$ \cite{LNS09}. For the special case of spanning tree problem, i.e, $r \equiv 1$,  there is a $(1, 1)$-integral polytope \cite{SL15}.

We will use these polytopes in our algorithm, and will show that that their existence implies good approximation algorithms.  More formally, we prove the following theorem.

\begin{theorem} \label{thm:polytope-alg}
Assuming the existence of an $(\alpha, \beta)$-integral polytope, there is a randomized algorithm which outputs a subgraph $H$ of $G$ satisfying the connection requirements. The expected cost of $H$ is at most $\alpha \cdot \text{opt}$ and the expectation of the $p$-norm of degree vector is at most $\alpha^{1/p}(\alpha+\beta)^{1-1/p}A$; recall that $\opt$ is the value of the instance.
\end{theorem}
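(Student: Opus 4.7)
The plan is to write a convex LP relaxation, solve it, and then apply iterative rounding guided by the $(\alpha,\beta)$-integrality assumption. For the relaxation I would use $\min c^\top x$ subject to $x \in \calP$, $y_v = x(\delta(v))$ for every $v$, and $\sum_v g(y_v) \leq A^p$, where
\[ g(y) = \begin{cases} y, & y \in [0,1]\\ y^p, & y \ge 1. \end{cases}\]
This $g$ is the pointwise maximum of two convex functions, so the program is convex; it coincides with $y^p$ at every nonnegative integer, so the constraint is a valid relaxation; and by charging $y$ instead of $y^p$ on $[0,1]$ it correctly accounts for the fact that a fractionally used vertex still contributes roughly $1$ to the integral degree (the obstacle flagged in the introduction). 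Solving this convex program, e.g.\ by the ellipsoid method with a separation oracle for $\calP$ and a subgradient oracle for $g$, yields $(x^*, y^*)$ with $c^\top x^* \le \opt$ and $\sum_v g(y^*_v) \le A^p$.

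Next, set $B_v := \max(y^*_v, 1)$, so that $B \in [1,\infty)^V$ as required by Definition~\ref{defn:integral} and $x^*$ is feasible for $\calP \cap \calQ_B$. Iteratively invoke $(\alpha,\beta)$-integrality: at each step, take an extreme point and either round up an edge with $x_e \ge 1/\alpha$ (integrating it and decrementing the bounds at its endpoints) or drop the degree constraint at a tight vertex with support at most $B_v + \beta$. The output $H$ is integral and in $\calP$, hence satisfies the connection requirements. The standard analysis that every rounded edge had LP value at least $1/\alpha$ at the moment of rounding gives $c(H) \le \alpha \cdot c^\top x^* \le \alpha\, \opt$, and tracing the degree contributions across rounds yields $d_H(v) \le B_v + \beta$ for every vertex.

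For the $\ell_p$-degree bound the goal is $\sum_v d_H(v)^p \le \alpha (\alpha + \beta)^{p-1} A^p$, from which the $p$-th root is the claimed $\alpha^{1/p}(\alpha+\beta)^{1-1/p} A$. The main identity is the edge-counting decomposition
\[ \sum_v d_H(v)^p \;=\; \sum_v d_H(v)\, d_H(v)^{p-1} \;\le\; \sum_{e=(u,v)\in H}\bigl((B_u+\beta)^{p-1} + (B_v+\beta)^{p-1}\bigr), \]
which I view as $w(H)$ for edge weights $w_e := (B_u+\beta)^{p-1} + (B_v+\beta)^{p-1}$. A charging argument parallel to the cost bound would give $w(H) \le \alpha\, w^\top x^* = \alpha \sum_v y^*_v (B_v + \beta)^{p-1}$. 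For $v$ with $y^*_v \ge 1$, bound $y^*_v (y^*_v + \beta)^{p-1} \le (1+\beta)^{p-1}(y^*_v)^p = (1+\beta)^{p-1} g(y^*_v)$; for $v$ with $y^*_v < 1$, bound $y^*_v (1+\beta)^{p-1} = (1+\beta)^{p-1} g(y^*_v)$. Summing and using $\sum_v g(y^*_v) \le A^p$ together with $(1+\beta)^{p-1} \le (\alpha+\beta)^{p-1}$ finishes the calculation.

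The main obstacle I anticipate is the charging step $w(H) \le \alpha\, w^\top x^*$ for weights $w$ that differ from the LP objective $c$; the standard iterative rounding analysis only guarantees the inequality for the cost $c$ it was run with. One route is to randomize the bound: for each $v$ with $y^*_v < 1$, include $v$ in the degree-constrained set with probability $y^*_v$ and otherwise leave $v$ unconstrained (or, with care, force $x_e = 0$ on $\delta(v)$). The per-vertex contribution to $\mathbb{E}[\sum_v d_H(v)^p]$ then becomes directly proportional to $g(y^*_v)$, and Jensen's inequality converts this into the claimed bound on $\mathbb{E}[\|d_H\|_p]$. Making such a randomization compatible with feasibility of $\calP$ (so that deleting edges or leaving vertices unconstrained does not break connectivity or inflate cost) is the delicate point; the spanning-tree specialization quoted in the introduction should then follow from the same template by plugging in the $(1,1)$-integral spanning-tree polytope of \cite{SL15}.
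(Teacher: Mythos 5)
Your setup matches the paper's: the same convex relaxation with the piecewise function ($y$ on $[0,1]$, $y^p$ beyond), the same choice $B_v = \max(y^*_v,1)$, and the same iterative-rounding skeleton driven by $(\alpha,\beta)$-integrality. You also correctly locate the crux: to bound $\sum_v d_H(v)^p$ you need to charge the integral solution against the fractional one under a weight function (your $w_e$) that is \emph{not} the LP objective, and deterministic iterative rounding only controls the objective it optimizes. But you do not resolve this, and your proposed fix --- randomly including each low-fractional-degree vertex in the degree-constrained set with probability $y^*_v$, or zeroing out $\delta(v)$ --- is not the right mechanism and is left undeveloped; as you note yourself, zeroing edges can destroy feasibility of $\calP$, and leaving a vertex unconstrained gives no handle on its final degree (which could be as large as its support), so the per-vertex expectation is not ``directly proportional to $g(y^*_v)$.'' A smaller slip: the deterministic guarantee is not $d_H(v)\le B_v+\beta$ but roughly $\alpha B_v+\beta$, because each rounded edge at $v$ raises the residual degree budget by up to $1-1/\alpha$; this happens not to change your final arithmetic, but the statement as written is false.

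The paper's resolution is to randomize \emph{every} iteration of the rounding, not just the treatment of small-degree vertices: at each step the current point $x$ is written as a convex combination of extreme points of $\calP\cap\calQ_B$ and one is sampled with the corresponding probability, so that $\E[x']=x$. This makes the sequence of (frozen-at-rounding-time) values $\bar x^0_e,\dots,\bar x^T_e$ a martingale in every coordinate (Lemma~\ref{lemma:martingale}), which controls the expectation of \emph{any} linear functional of the solution --- exactly the property your charging step needs and cannot get deterministically. The cost bound follows from $x_e\le\alpha\bar x_e$, and the degree bound follows by combining the martingale identity $\E[x^T(\delta(v))]\le\alpha x^0(\delta(v))$ with the almost-sure bound $x^T(\delta(v))\le \alpha x^0(\delta(v))+\beta$ (or $\le\alpha+\beta$ when $x^0(\delta(v))<1$, via Observation~\ref{obs:barx}) through the elementary inequality $\E[X^p]\le \E[X]\,M^{p-1}$ for $0\le X\le M$, giving $\E[(x^T(\delta(v)))^p]\le\alpha(\alpha+\beta)^{p-1}f(x^0(\delta(v)))$ per vertex; Jensen then yields the stated bound on the expected $\ell_p$-norm. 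So the proposal is missing the key algorithmic idea (martingale-preserving random extreme-point selection) that makes the degree analysis go through.
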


Note that this theorem, together with the existence of a $(2,3)$-integral polytope for the general case and a $(1,1)$-integral polytope for the spanning tree case, imply Theorem~\ref{thm:network-design}.  So we focus on proving Theorem~\ref{thm:polytope-alg}.

\subsection{The Convex Program}
Define a function $f:\R_{\geq 0} \to \R_{\geq 0}$ as follows: $f(x)  = \begin{cases}
		x & \text{if } x \in [0, 1]\\
		x^p & \text{if } x > 1
	\end{cases}$.
Figure~\eqref{fig:f} shows this function for $p = 2$.  This is a convex function for $p \geq 1$.

Let $\calP$ be an $(\alpha, \beta)$-integral polytope.  The following is our convex programming relaxation for the problem:
\begin{equation}
	\min \sum_{e \in E}c_ex_e \qquad  \text{s.t.} \qquad x \in \calP, \qquad
	\sum_{v \in V} f(x(\delta(v)))  \leq A^p.
		 \label{LP}
\end{equation}
Recall that using our notation, $x(\delta(v))$ is the sum of $x$ values of edges incident to $v$ in $G$. 
\eqref{LP} is a convex program and can be solved efficiently.  Since the indicator vector of the optimum subgraph $H$ satisfies all the constraints, the value of the convex program is at most $\text{opt}$.

We note that if we instead used the function $f(x) = x^p$ (i.e., without handling the $0 \leq x \leq 1$ case separately), we would still have a convex relaxation of our problem.  However, it is not hard to show that this relaxation has an extremely large integrality gap (even if we are allowed to violate the $\ell_p$-norm constrain by a polylogarithmic factor).  Treating $0 \leq x \leq 1$ differently from $x > 1$ is one of the key ideas in our approximation algorithm.
	\begin{figure}
		\centering
		\begin{subfigure}[b]{0.28\textwidth}
			\centering
			\includegraphics[width=0.95\textwidth]{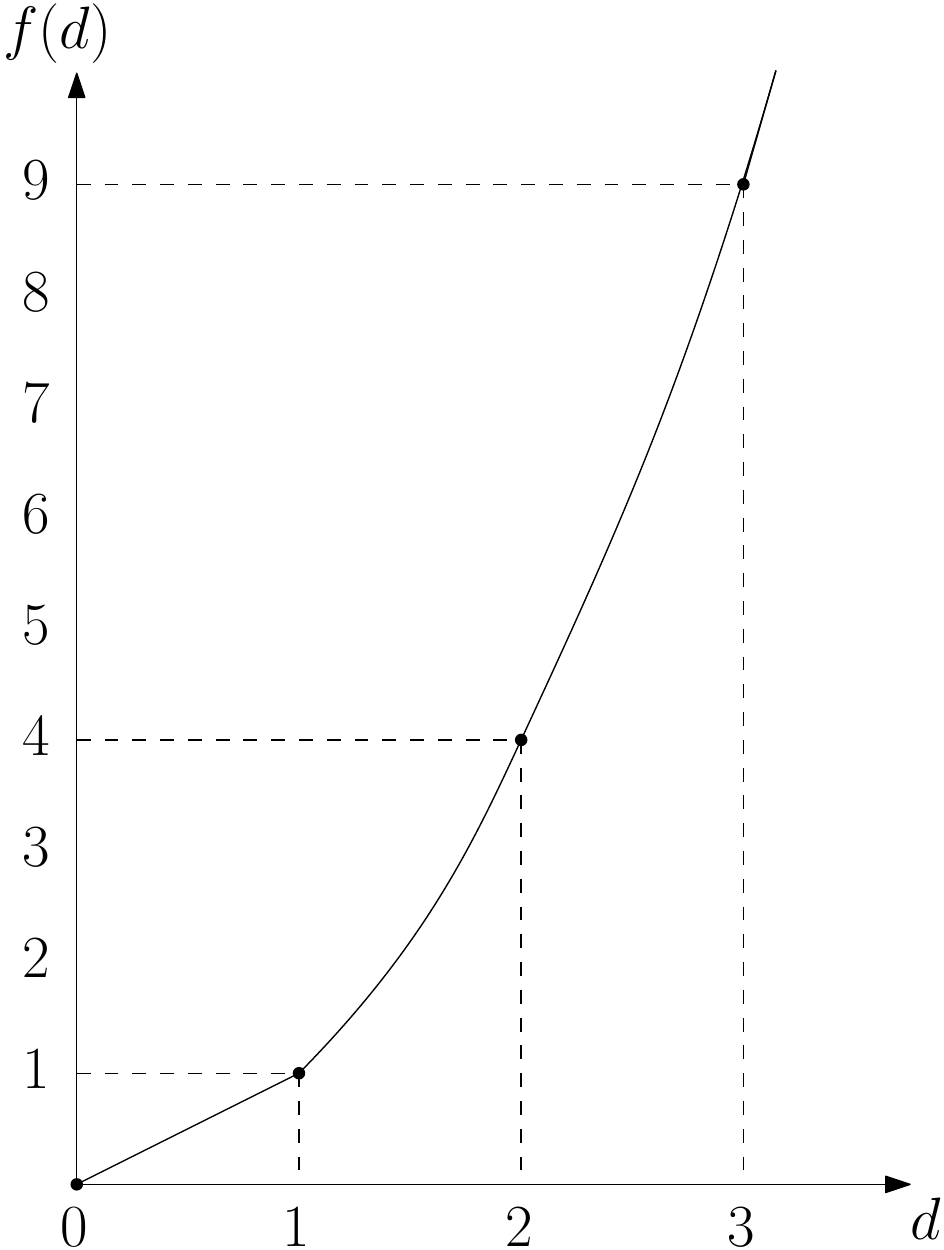}
			\caption{The function $f$ for $p = 2$.}
			\label{fig:f}
		\end{subfigure}\hspace{0.1\textwidth}
		\begin{subfigure}[b]{0.6\textwidth}
				\centering
				\begin{algorithmic}[1]
					\State Solve LP\eqref{LP} to obtain a solution $x$ \label{step:solve-LP}
					\State let $B_v \gets \max\{x(\delta(v)), 1\}$ for every $v \in V$ \label{step:init-B}
					\While{true} \label{step:while}
						\State randomly choose an extreme point $x'$ of $\calP \cap \calQ_B$ such that $\E[x'] = x$ \label{step:x'}
						\State $x \gets x'$ \label{step:x-gets-x'}
						\If{$x$ is integral} \Return $x$ \EndIf \label{step:return}
						\If {case \ref{event:round} happens for some $e = (u, v) \in E$}
							\State $x_e \gets 1, 
							B_v \gets x(\delta(v)), B_u \gets x(\delta(u))$ \label{step:round-e}
						\Else \Comment{case \ref{event:relax} happens for some $v$}
							\State $B_v \gets \infty$ \label{step:relax}
						\EndIf
					\EndWhile
				\end{algorithmic}
			\caption{Iterative Rounding Algorithm for Network Design.}
			\label{alg:iterative-rounding}
		\end{subfigure}
		\caption{The function $f$ and the iterative rounding algorithm.}
	\end{figure}

\subsection{The Iterative Rounding Algorithm}

Our iterative rounding algorithm is described in Figure~\eqref{alg:iterative-rounding}.  In Step~\ref{step:solve-LP}, we solve the convex relaxation~\eqref{LP} to obtain an extreme solution $x$, which can be done in polynomial time using standard techniques.   Then in Step~\ref{step:init-B} we define $B_v = \max\{x(\delta(v)), 1\}$ for every $v$ to be the upper bound on the degree of $v$. So, before Loop~\ref{step:while}, we have $x \in \calP \cap \calQ_B$. We shall maintain this property before and after each iteration of the loop. 
	
	In each iteration of Loop~\ref{step:while}, we randomly choose a vertex point $x'$ of $\calP \cap \calQ_B$ such that $\E[x'] = x$ (Step~\ref{step:x'}) and then update $x$ to be the $x'$ (Step~\ref{step:x-gets-x'}).  This is possible since at the beginning of the iteration we have $x \in \calP \cap \calQ_B$.   If $x$ is integral, we then return $x$ in Step \ref{step:return}. If we did not return, by that $\calP$ is $(\alpha, \beta)$-integral, either \ref{event:round} or \ref{event:relax} happens. In the former case,  we update $x_e$ to $1$, and change $B_v$ and $B_u$ for the two end vertices $u, v$ of $e$ so that we still have $B_{v'} = \max\{x(\delta(v')), 1\}$ for every $v' \in V$ (Step~\ref{step:round-e}). In the latter case, we change $B_v$ to $\infty$ so that there will be no degree constraint for $v$ from now on.  Notice that in either case, we maintain the invariant that $x \in \calP \cap \calQ_B$ as $\calP$ is upward-closed.  

	Notice that once $x_e$ becomes $0$ or $1$ in some iteration, it will remain unchanged. This holds since for $\E[x'_e] = x_e \in \{0, 1\}$ to hold, we must always have $x'_e = x_e$. When the algorithm terminates, it returns an integral $x$ which satisfies the connectivity requirements.  This holds since we have $x \in \calP$ and $\calP$ is good.  The algorithm will terminate in $O(|E|)$ iterations since in every iteration, we either fixed the value of some $x_e$ to 1, or changed some $B_v$ from a finite number to $\infty$.

\subsection{Analysis of the Algorithm}

We now begin to analyze the algorithm.  As discussed, the algorithm will terminate with a subgraph which satisfies the connectivity requirements.  
To prove Theorem~\ref{thm:polytope-alg}, we need to analyze the total cost and the $\ell_p$-norm of the degrees.  

In Step~\ref{step:round-e}, we say that we \emph{round} the edge $e$.  In Step~\ref{step:relax}, we say we \emph{relax} the vertex $v$. 
	At any time of the algorithm, we define a vector $\bar x \in [0, 1]^E$ as follows. If $e$ has not been rounded yet, then let $\bar x_e = x_e$.  Otherwise, let $\bar x_e$ be the value of $x_e$ right before Step~\ref{step:round-e} in which we round $e$. Thus, from the moment, $\bar x_e$ remains unchanged.
		
	Let $T$ be the number of iterations we run Loop~\ref{step:while}; notice that this is a random variable.   For every integer $t \in [0, T]$, we let $x^t, \bar x^t, B^t$ to be the values of $x, \bar x, B$ at the end of the $t$-th iteration of the Loop~\ref{step:while}. So $x^T$ is the output of the algorithm. 

\begin{obs}
    \label{obs:barx}
	The following statements are true. 
	\begin{enumerate}[leftmargin=*, label=(\ref{obs:barx}\alph*), itemsep=0pt]
		\item During Loop~\ref{step:while}, we always have $\bar x_e \leq x_e \leq \alpha \bar x_e$ for every $e \in E$.  \label{property:barx-to-x}
		\item Assume $x^0(\delta(v)) < 1$ for a vertex $v \in V$. Then at the first moment when  $x(\delta(v)) \geq 1$ holds, we have $\bar x(\delta(v)) \leq 1$. \label{property:delta-v-becomes-1}
		\item $\bar x(\delta(v))$ does not change from the first moment $x(\delta(v)) \geq 1$ holds, until the moment $v$ is relaxed, or the end of the algorithm if this does not happen. \label{property:barx-no-change}
	\end{enumerate}
\end{obs}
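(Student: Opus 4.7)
The plan is to prove the three parts by tracking $\bar x$ alongside $x$ throughout the iterations and leveraging the update rules of the algorithm.

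For part~\ref{property:barx-to-x}, I would split on whether $e$ has been rounded. If not, $\bar x_e = x_e$ by definition, so both inequalities are trivial. If $e$ has been rounded, then $\bar x_e$ was frozen at the value $x_e$ held immediately before Step~\ref{step:round-e} fired on $e$; by case~\ref{event:round}, that value lies in $[1/\alpha, 1)$. After the rounding, $x_e = 1$, and it stays at $1$ forever because every subsequent $x'_e \in [0,1]$ with $\E[x'_e] = x_e = 1$ must equal $1$. Thus $\bar x_e \in [1/\alpha, 1)$ while $x_e = 1$, giving $\bar x_e \leq x_e \leq \alpha \bar x_e$.

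For part~\ref{property:delta-v-becomes-1}, let $m$ be the first moment $x(\delta(v)) \geq 1$. Strictly before $m$, no edge of $\delta(v)$ has been rounded, since a rounded edge has $x_e = 1$. Under the hypothesis $x^0(\delta(v)) < 1$, $B_v$ is initialized to $1$; it can only change via Step~\ref{step:round-e} on a $\delta(v)$-edge or via Step~\ref{step:relax}, but Step~\ref{step:relax} for $v$ requires $x(\delta(v)) = B_v \geq 1$ and so cannot occur before $m$. Hence $B_v = 1$ and $x(\delta(v)) \leq 1$ strictly before $m$. I then case on what produced $x(\delta(v)) \geq 1$ at $m$. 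If it was Step~\ref{step:x'}, then $\bar x_e = x_e$ on all of $\delta(v)$, so $\bar x^m(\delta(v)) = x^m(\delta(v)) \leq B_v = 1$. If it was Step~\ref{step:round-e} on some $e$, then $e \in \delta(v)$ (otherwise $x(\delta(v))$ would not change), and $\bar x_e$ is set to the pre-rounding value of $x_e$, so $\bar x^m(\delta(v))$ equals $x(\delta(v))$ just before the rounding, which is below $1$. Either way $\bar x^m(\delta(v)) \leq 1$.

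For part~\ref{property:barx-no-change}, the first thing I show is that once $x(\delta(v)) = B_v$ holds (which it does at $m$ by the case analysis above), this equality is preserved until $v$ is relaxed. In Step~\ref{step:x'}, $x'(\delta(v)) \leq B_v$ together with $\E[x'(\delta(v))] = x(\delta(v)) = B_v$ force $x'(\delta(v)) = B_v$ almost surely; Step~\ref{step:round-e} on any $e \in \delta(v)$ resets $B_v$ to the new $x(\delta(v))$; and rounding an edge outside $\delta(v)$ changes neither side. Letting $R^t$ denote the $\delta(v)$-edges rounded by time $t$, and using that $x_e = 1$ on $R^t$ while $\bar x_e = x_e$ off $R^t$, I decompose
\[
\bar x^t(\delta(v)) \;=\; \sum_{e \in R^t}\bar x^t_e \;+\; \bigl(x^t(\delta(v)) - |R^t|\bigr).
\]
In Step~\ref{step:x'} all three quantities on the right are constant, so $\bar x(\delta(v))$ is unchanged; in Step~\ref{step:round-e} on $e \in \delta(v)$, $|R^t|$ grows by $1$ and $x^t(\delta(v))$ by $1 - \bar x_e$, so the parenthesized term drops by $\bar x_e$, while $\sum_{e \in R^t}\bar x^t_e$ gains exactly $\bar x_e$, yielding zero net change. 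The hardest step will be establishing the invariant $x(\delta(v)) = B_v$ for part~\ref{property:barx-no-change}, which hinges on the fact that a bounded random variable whose expectation coincides with its upper bound must almost surely equal that bound; the rest amounts to careful bookkeeping of how Step~\ref{step:x'} and Step~\ref{step:round-e} update $x$, $\bar x$, and $B$ in tandem.
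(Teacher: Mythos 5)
Your proof is correct and takes essentially the same route as the paper's: parts \ref{property:barx-to-x} and \ref{property:delta-v-becomes-1} are argued identically (freeze value in $[1/\alpha,1)$ at rounding time; two-scenario analysis of the first moment $x(\delta(v))\geq 1$). For part \ref{property:barx-no-change} you spell out explicitly---via the invariant $x(\delta(v))=B_v$ and the decomposition $\bar x^t(\delta(v))=\sum_{e\in R^t}\bar x^t_e+(x^t(\delta(v))-|R^t|)$---the bookkeeping that the paper's terser proof leaves implicit; the only nit is that ``$x(\delta(v))=B_v$ holds at $m$'' in the case $x^0(\delta(v))\geq 1$ follows from the initialization in Step~\ref{step:init-B} rather than from the case analysis of part \ref{property:delta-v-becomes-1}, but that is immediate.
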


\begin{proof}
    $\bar x_e \leq x_e$ by the definition of $\bar x_e$.  Moreover, $x_e \leq \alpha \bar x_e$ as if $x_e > \bar x_e$, then $x_e = 1$ and $x_e \geq \frac1\alpha$. So \ref{property:barx-to-x} holds.

    To prove \ref{property:delta-v-becomes-1}, we consider two scenarios. In the first scenario, the moment is after Step~\ref{step:x-gets-x'} in some iteration.  In this scenario, $\bar x(\delta(v)) = x(\delta(v)) = 1$ since $B_v = 1$ at the moment. In the second scenario, the moment is after we round some edge $e \in \delta(v)$ in Step~\ref{step:round-e}.  In this case $\bar x(\delta(v))$ is the same as $x(\delta(v))$ before the step, which is strictly less than 1.
	
    \ref{property:barx-no-change} holds since we maintained $B_v = x(\delta(v))$ from the moment $x(\delta(v))$ becomes at least 1. If $\bar x_e \neq x_e$ at some time, it must be the case that $x_e = 1$.  In this case, both $x_e$ and $\bar x_e$ will not change in the future. 
\end{proof}

We can now analyze the expected cost of the algorithm.  First, though, we will need a structural result.
  
    \begin{lemma}
        \label{lemma:martingale}
        For every edge $e \in E$, the sequence ${\bar x}^0_e, {\bar x}^1_e, \cdots, {\bar x}^T_e$ is a martingale.
    \end{lemma}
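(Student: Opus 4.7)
The plan is to verify the defining martingale identity $\mathbb{E}[\bar x^{t+1}_e \mid \mathcal{F}_t] = \bar x^t_e$ for every edge $e$ and every iteration $t$, where $\mathcal{F}_t$ is the $\sigma$-algebra generated by the first $t$ iterations of Loop~\ref{step:while}. I would split into two cases according to whether $e$ has already been rounded (via Step~\ref{step:round-e}) by the end of iteration $t$.

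In the case where $e$ has already been rounded, $\bar x^t_e$ is frozen at the pre-rounding value and $x^t_e = 1$. Since Step~\ref{step:x'} samples $x'$ with $\mathbb{E}[x'_e \mid \mathcal{F}_t] = x^t_e = 1$ and $x'_e \in [0,1]$, we get $x'_e = 1$ almost surely, so $e$ is not modified again in iteration $t+1$ and $\bar x^{t+1}_e = \bar x^t_e$ deterministically. The symmetric scenario where $x^t_e = 0$ (but $e$ was never formally rounded) is handled the same way, giving $x'_e = 0$ a.s.

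In the case where $e$ has not yet been rounded by iteration $t$, I claim the key identity $\bar x^{t+1}_e = x'_e$ holds on every sample path, regardless of which branch of the if/else is taken in iteration $t+1$. Indeed, if $e$ is not the edge chosen for rounding in iteration $t+1$ (either because some other edge triggers case~\ref{event:round} or because case~\ref{event:relax} fires), then $\bar x^{t+1}_e = x^{t+1}_e = x'_e$ since $e$ has still not been rounded. If instead $e$ itself is rounded in iteration $t+1$, then by the definition of $\bar x_e$ as the value of $x_e$ \emph{right before} Step~\ref{step:round-e}, we again get $\bar x^{t+1}_e = x'_e$ (the post-Step~\ref{step:x-gets-x'}, pre-Step~\ref{step:round-e} value). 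Combining the two sub-cases, the martingale identity reduces to $\mathbb{E}[x'_e \mid \mathcal{F}_t] = x^t_e = \bar x^t_e$, which is the unbiasedness property built into Step~\ref{step:x'}.

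The main conceptual point, and the only mildly subtle piece of bookkeeping, is that whether rounding of $e$ occurs in iteration $t+1$ is itself random (depending on $x'$), so one cannot simply split the expectation into "rounded" and "not rounded" contributions with fixed weights. The virtue of the definition of $\bar x_e$ is that it \emph{absorbs} the deterministic jump from $x'_e$ to $1$ in Step~\ref{step:round-e}, so that on every sample path $\bar x^{t+1}_e$ coincides with the unbiased random variable $x'_e$; this collapse is what makes the argument a one-liner rather than a case analysis over which edge is rounded.
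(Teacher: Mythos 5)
Your proof is correct and follows essentially the same route as the paper's: fix the history, use the unbiasedness of the sampling in Step~\ref{step:x'}, and note that once $e$ has been rounded the value $\bar x_e$ is frozen. If anything, your handling of the sub-case where $e$ itself gets rounded in the current iteration (observing that $\bar x^{t+1}_e = x'_e$ holds pathwise, so the randomness of \emph{which} edge is rounded is absorbed into the definition of $\bar x_e$) is more careful than the paper's write-up, which asserts $\bar x^t_e = \bar x^{t-1}_e$ with probability $1$ whenever $e$ is rounded ``in iteration $t$ or before'' and then conditions on $e$ not having been rounded by the end of iteration $t$ --- an event that depends on the randomness of iteration $t$ itself.
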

	\begin{proof}
		Focus on an iteration $t \geq 1$ and edge $e \in E$, and we fix the sequence ${\bar x}^0_e,  {\bar x}^1_e, \cdots, {\bar x}^{t-1}_e$. For simplicity we use $\E'[\cdot]$ to denote $\E[\cdot | {\bar x}^0_e,  {\bar x}^1_e, \cdots, {\bar x}^{t-1}_e]$. We need to prove $\E'[\bar x^t_e] = \bar x^{t-1}_e$.
		
		If we rounded $e$ in iteration $t$ or before, then $\bar x^t_e = \bar x^{t-1}_e$ happens with probability 1. So, we can assume that $e$ has not been rounded by the end of iteration $t$.  In this case, $\bar x^{t-1}_e = x^{t-1}_e$.
		
		So, in iteration $t$, either \ref{event:round} happens for some $e' \neq e$, or \ref{event:relax} happens. In either case, we have $\E'[{\bar x}^t_e] = \E'[x^t_e]  = x^{t-1}_e = {\bar x}^{t-1}_e$ by the way we define the distribution for $x'$ in Step~\ref{step:x'}. Therefore, ${\bar x}^0_e, {\bar x}^1_e, \cdots, \bar x^T_e$ is a martingale. 
	\end{proof}
	
	\begin{corollary} \label{cor:cost}
		$\E\left[\sum_{e \in E}c_ex^T_e\right] \leq \alpha \sum_{e \in E} c_e x^0_e$.
	\end{corollary}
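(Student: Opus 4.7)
The plan is to deduce the corollary almost immediately from Lemma~\ref{lemma:martingale} together with property~\ref{property:barx-to-x} of Observation~\ref{obs:barx}. The key observation is that the corollary compares the final (integral) solution $x^T$ against the initial LP solution $x^0$, and we have two natural bridges: $x^T$ is bounded above in terms of $\bar x^T$ (by $x^T_e \leq \alpha \bar x^T_e$), and $\bar x^T$ is tied back to $\bar x^0 = x^0$ by the martingale property.

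First, I would note that $\bar x^0_e = x^0_e$ for every edge $e$: before any iteration of Loop~\ref{step:while} has been executed, no edge has yet been rounded, so by definition $\bar x_e = x_e$. Next, apply the optional stopping theorem to the martingale $\bar x^0_e, \bar x^1_e, \ldots, \bar x^T_e$ provided by Lemma~\ref{lemma:martingale}. The stopping time $T$ is almost surely bounded — indeed, as observed right after the algorithm description, each iteration either fixes some $x_e$ to $1$ or relaxes some $B_v$ to $\infty$, so $T \leq |E| + |V|$ deterministically — and the martingale values lie in $[0,1]$, so optional stopping applies and yields
\begin{equation*}
\E[\bar x^T_e] = \bar x^0_e = x^0_e.
\end{equation*}

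Then I would invoke property~\ref{property:barx-to-x}, which holds throughout the run of the algorithm and in particular at the end, giving $x^T_e \leq \alpha \bar x^T_e$ for every edge $e \in E$. Taking expectations and combining with the martingale identity gives $\E[x^T_e] \leq \alpha \cdot x^0_e$ for every $e$. Finally, multiplying by $c_e \geq 0$ and summing over $e \in E$ (using linearity of expectation) yields exactly
\begin{equation*}
\E\Big[\sum_{e \in E} c_e x^T_e\Big] = \sum_{e \in E} c_e \, \E[x^T_e] \leq \alpha \sum_{e \in E} c_e x^0_e,
\end{equation*}
as desired.

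There is no real obstacle here — the work has all been done in Lemma~\ref{lemma:martingale} and Observation~\ref{obs:barx}. The only minor care needed is to justify the use of optional stopping, which is immediate from the deterministic bound on $T$ and the boundedness of the martingale in $[0,1]$.
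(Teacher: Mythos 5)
Your proof is correct and follows essentially the same route as the paper: bound $x^T_e \leq \alpha \bar x^T_e$ via \ref{property:barx-to-x}, use the martingale of Lemma~\ref{lemma:martingale} to get $\E[\bar x^T_e] = \bar x^0_e = x^0_e$, and sum with the costs. Your explicit appeal to optional stopping (justified by the deterministic bound on $T$) is a slightly more careful articulation of the step the paper states implicitly, but the argument is the same.
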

	\begin{proof}
		\begin{align*}
			\E\left[\sum_{e \in E}c_ex^T_e\right] \leq \alpha \E\left[\sum_{e \in E}c_e\bar x^T_e\right] = \alpha \sum_{e \in E}c_e \bar x^0_e = \alpha \sum_{e \in E} c_e x^0_e.
		\end{align*}
		The inequality is by \ref{property:barx-to-x} and the first equality used Lemma~\ref{lemma:martingale}.
	\end{proof}
	
Now that we understand the expected cost, it only remains to analyze the degree constraint.  From now on we fix a vertex $v \in V$. We upper bound $x^T(\delta(v))$, which will in turn give an upper bound on $\E\left[(x^T(\delta(v)))^p\right]$. The main lemma we prove is 
    \begin{lemma}
        \label{lemma:bound-degree}
        For every $v \in V$, we have $\E\left[(x^T)^p(\delta(v))\right] \leq \alpha (\alpha + \beta)^{p-1} \cdot f(x^0(\delta(v)))$.
    \end{lemma}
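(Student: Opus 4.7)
\medskip
\noindent\textbf{Proof proposal for Lemma~\ref{lemma:bound-degree}.}
Let $s := x^0(\delta(v))$ and $M_t := \bar x^t(\delta(v))$. My plan is to decompose the $p$-th moment into a first-moment piece and a deterministic $L^\infty$ piece via $y^p = y \cdot y^{p-1}$. Concretely, I will establish two ingredients:
\begin{enumerate}[label=(\roman*), leftmargin=*]
\item $\E[x^T(\delta(v))] \leq \alpha s$; and
\item $x^T(\delta(v)) \leq \alpha \max\{s,1\} + \beta$ deterministically.
\end{enumerate}
Ingredient (i) is immediate from the pieces already in the paper: Observation~\ref{obs:barx}\ref{property:barx-to-x} gives $x^T_e \leq \alpha \bar x^T_e$, Lemma~\ref{lemma:martingale} gives $\E[\bar x^T_e] = x^0_e$, and summing over $e \in \delta(v)$ yields the bound. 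The main work is ingredient (ii).

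For (ii), I will do casework on $s$ and on whether $v$ is ever relaxed. Suppose first $s \geq 1$. Then $\sigma_v = 0$, so by Observation~\ref{obs:barx}\ref{property:barx-no-change}, $M_t = s$ throughout the constrained phase. If $v$ is never relaxed, $x^T(\delta(v)) \leq \alpha M_T = \alpha s$ directly from~\ref{property:barx-to-x}. If $v$ is relaxed at iteration $\tau_v$, then at relaxation the algorithm maintains $B_v = x^{\tau_v}(\delta(v)) \leq \alpha M_{\tau_v} = \alpha s$, and the condition~\ref{event:relax} gives $|\{e \in \delta(v): x^{\tau_v}_e > 0\}| \leq B_v + \beta \leq \alpha s + \beta$; since this support set is non-increasing (once $x_e$ hits $0$ it stays $0$), and at the end $x^T_e \in \{0,1\}$, we conclude $x^T(\delta(v)) \leq \alpha s + \beta$. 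Now suppose $s < 1$. If $x(\delta(v))$ never reaches $1$, integrality of $x^T$ forces $x^T(\delta(v)) = 0$. Otherwise, let $\sigma_v$ be the first moment $x(\delta(v)) \geq 1$; by Observation~\ref{obs:barx}\ref{property:delta-v-becomes-1}, $M_{\sigma_v} \leq 1$, and by~\ref{property:barx-no-change} this value is frozen until relaxation, so the previous argument applied with $M^* = M_{\sigma_v}$ in place of $s$ yields $x^T(\delta(v)) \leq \alpha M^* + \beta \leq \alpha + \beta$.

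Combining (i) and (ii), write $C_v := \alpha \max\{s,1\} + \beta$; then
\[
\E[(x^T(\delta(v)))^p] = \E\bigl[x^T(\delta(v)) \cdot (x^T(\delta(v)))^{p-1}\bigr] \leq C_v^{p-1} \cdot \E[x^T(\delta(v))] \leq \alpha s \cdot C_v^{p-1}.
\]
If $s \geq 1$, then $C_v = \alpha s + \beta \leq (\alpha+\beta) s$, so $\alpha s \cdot C_v^{p-1} \leq \alpha (\alpha+\beta)^{p-1} s^p = \alpha(\alpha+\beta)^{p-1} f(s)$. If $s < 1$, then $C_v = \alpha+\beta$, giving $\alpha s \cdot C_v^{p-1} = \alpha(\alpha+\beta)^{p-1} s = \alpha(\alpha+\beta)^{p-1} f(s)$. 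Either way we obtain the claimed inequality.

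The main obstacle I anticipate is (ii), specifically making sure that (a) when $v$ is relaxed, the pre-relaxation value of $B_v$ is tight with $x(\delta(v))$ and hence controlled by $\alpha M_{\tau_v}$, and (b) the support of $x$ on $\delta(v)$ is monotone non-increasing from $\tau_v$ onward so that the integer degree at the end is bounded by the number of positive coordinates at $\tau_v$. Once these are carefully justified via Observation~\ref{obs:barx}, the rest is a direct calculation.
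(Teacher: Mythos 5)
Your proposal is correct and follows essentially the same route as the paper: a deterministic worst-case bound $x^T(\delta(v)) \leq \alpha\max\{x^0(\delta(v)),1\}+\beta$ obtained from Observation~\ref{obs:barx} and the relaxation condition~\ref{event:relax}, a first-moment bound $\E[x^T(\delta(v))] \leq \alpha x^0(\delta(v))$ from the martingale, and the combination $\E[Y^p]\leq C^{p-1}\E[Y]$ with the same case split on $x^0(\delta(v))\gtrless 1$. Your only addition is to spell out explicitly the support-monotonicity argument justifying why the final integral degree is bounded by the number of positive coordinates at relaxation time, which the paper leaves implicit.
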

    \begin{proof}
	We first consider the case $x^0(\delta(v)) \geq 1$. Let $t$ be the iteration in which $v$ is relaxed, or let $t = T$ if $v$ is not relaxed during the algorithm. By Property~\ref{property:barx-no-change}, $\bar x(\delta(v))$ does not change until the end of iteration $t$. Then, we have $x^T(\delta(v)) \leq  x^t(\delta(v)) + \beta \leq \alpha \bar x^t(\delta(v)) + \beta = \alpha \bar x^0(\delta(v)) + \beta = \alpha x^0(\delta(v)) + \beta$. Notice that this happens with probability 1.

	Notice that $\E[x^T(\delta(v))] \leq \alpha \E [\bar x^T(\delta(v))] = \alpha \bar x^0(\delta(v)) = \alpha x^0(\delta(v))$ by Lemma~\ref{lemma:martingale}. We have:
		\begin{align*}
			\E\left[(x^T(\delta(v)))^p\right]  \leq  \frac{\alpha x^0(\delta(v))}{\alpha x^0(\delta(v)) + \beta} (\alpha x^0(\delta(v)) + \beta)^p = \alpha x^0(\delta(v))(\alpha x^0(\delta(v)) + \beta)^{p-1}.
		\end{align*}
		This implies 
		\begin{align*}
			\frac{\E\left[(x^T)^p(\delta(v))\right]}{f(x^0(\delta(v)))} \leq \frac{\alpha x^0(\delta(v))(\alpha x^0(\delta(v)) + \beta)^{p-1}}{(x^0(\delta(v)))^p} = \alpha \left(\alpha + \frac{\beta}{x^0(\delta(v))}\right)^{p-1} 
			\leq \alpha(\alpha+\beta)^{p-1}.
		\end{align*}
		
	Now we consider the second case: $x^0(\delta(v)) < 1$.  Assume $x(\delta(v)) \geq 1$ happens at some time of the algorithm. By \ref{property:delta-v-becomes-1}, at the first moment when $x(\delta(v)) \geq 1$, we have $\bar x(\delta(v)) \leq 1$. By \ref{property:barx-no-change}, from the moment until the moment $v$ becomes relaxed (or until the end of the algorithm if $v$ is never relaxed), $\bar x(\delta(v))$ does not change.  Therefore, immediately after $v$ becomes relaxed, we have $\bar x(\delta(v)) \leq 1$. Thus $x^T(\delta(v))$ is at most the value of $x(\delta(v)) + \beta$ at this moment, which is at most $\alpha\bar x(\delta(v)) + \beta \leq \alpha + \beta$.   Again, we have $\E[x^T(\delta(v))] \leq \alpha x^0(\delta(v))$. So 
	\begin{align*}
		\E\left[(x^T(\delta(v)))^p\right]  \leq  \frac{\alpha x^0(\delta(v))}{\alpha + \beta} (\alpha + \beta)^p = \alpha x^0(\delta(v))(\alpha  + \beta)^{p-1}.
	\end{align*}
	Then, 
	\begin{align*}
		\frac{\E\left[(x^T)^p(\delta(v))\right]}{f(x^0(\delta(v)))} \leq \frac{\alpha x^0(\delta(v))(\alpha + \beta)^{p-1}}{x^0(\delta(v))} = \alpha(\alpha+\beta)^{p-1}.
	\end{align*}
	So, we always have $\E\left[(x^T)^p(\delta(v))\right] \leq \alpha(\alpha+\beta)^{p-1} f(x^0(\delta(v)))$. This implies $\E[\sum_{v}(x^T_\delta(v))^p] \leq \alpha(\alpha+\beta)^{p-1} A^p$.

        Now consider the case where $x(\delta(v)) \geq 1$ never happens; that is, $x^T(\delta(v)) < 1$. As $\E\left[x^T(\delta(v))\right] = x^0(\delta(v))$. Then we have $\E\left[(x^T)^p(\delta(v))\right] \leq x^0(\delta(v))$. The lemma clearly holds.
    \end{proof}

Corollary~\ref{cor:cost} and Lemma~\ref{lemma:bound-degree} imply Theorem~\ref{thm:polytope-alg}, which in turn implies Theorem~\ref{thm:network-design}.

\section{A Tree Labeling Problem}
\label{sec:tree-labeling}
	In this section, we introduce a tree labeling problem to which we reduce the Group Steiner Tree problem with degree bounds on bounded-treewidth graphs. We are given a full binary tree $\bfT = (\bfV, \bfE)$ rooted at $\bfr \in \bfV$.\footnote{It is not important to require the binary tree to be full; our algorithm works when some internal node has only one child. Assuming every internal node have 2 children is only for notational convenience.}  For every vertex $u \in \bfV$, we are given a finite set $L_u$ of \emph{labels} for $u$; we assume $L_u$'s are disjoint and let $L:= \union_{u \in \bfV}L_u$.   The output is a labeling $\vec\ell = (\ell_u \in L_u)_{u \in \bfV}$ of  the vertices $\bfV$, that satisfies the constraints described below. 
	\begin{itemize}
		\item {\bf (consistency constraints)}\ \  For every internal node $u$ of $\bfT$ with two children $v$ and $v'$, we are given a set $\Gamma_u \subseteq L_u \times L_{v} \times L_{v'}$. A valid labeling $\vec \ell$ must satisfy $(\ell_u, \ell_v, \ell_{v'}) \in \Gamma_u$.
		\item {\bf (covering constraints)}\ \ We are given $k$ subsets $S_1, S_2, \cdots, S_k \subseteq L$. A valid labeling $\vec \ell$ needs to satisfy that for every $t \in [k]$, $\ell(\bfV) \cap S_t \neq \emptyset$, where $\ell(\bfV)$ is defined as $\{\ell_u: u \in \bfV\}$. In words, $\ell(\bfV)$ needs to intersect every $S_t$. 
		\item {\bf (cost constraints)}\ \  We are given $m \geq 0$ linear constraints defined by the costs $(c^i_\ell \in [0, 1])_{i \in [m], \ell \in L}$. For every $i \in [m]$, a valid labeling $\vec \ell$ needs to satisfy $\sum_{u \in \bfV}c^i_{\ell_u} \leq 1$. In words, there  are $m$ types of resource, and we have 1 unit of each type. Setting the label of $u$ to $\ell$ will use $c^i_\ell$ units of type $i$-resource.
	\end{itemize}

	We say a labeling $\vec\ell = (\ell_u \in L_u)_{u \in \bfV}$ is \emph{consistent} if it satisfies the consistency constraints.  Given a consistent labeling $\vec\ell$, we say it covers group $S_t$ if $\ell(\bfV) \cap S_t \neq \emptyset$. We define its type-$i$ cost to be $\cost^i(\vec \ell):= \sum_{u \in \bfV}c^i_{\ell_u}$.
	So a valid labeling $\vec \ell$ for the instance is a consistent one that covers all groups, and has $\cost^i(\vec\ell) \leq 1$ for every $i \in [m]$. 

	Given a label tree instance, we let $n = |\bfV|$,  $D$ be the height of $\bfT$ (the maximum number of edges in a root-to-leaf path in $\bfT$)  
 and $\Delta = \max_{u \in \bfV}|L_u|$ be the maximum size of any $L_u$.  The main theorem we prove is the following:
	\begin{theorem}
		\label{thm:label-tree-main}
		Assume we are given a feasible label tree instance $(\bfT = (\bfV, \bfE), \bfr, (L_u)_u, \break (\Gamma_u)_u, (S_t)_{t \in [k]}, A \in [0, 1]^{m \times L})$, i.e., there is a valid labeling. There is a randomized algorithm that in time $\poly(n) \cdot \Delta^{O(D)}$ outputs a consistent labeling $\vec \ell$ such that the following holds.
		\begin{enumerate}[label=(\ref{thm:label-tree-main}\alph*), leftmargin=*]
			\item For every $t \in [k]$, we have $\Pr[\vec \ell \text{ covers group }S_t]  \geq \frac{1}D$. \label{property:label-tree-covering}
			\item For every $i \in [m]$,  we have $\E\big[\exp\big(\ln(1 + \frac{1}{2D}) \cdot \cost^i(\vec \ell)\big)\big] \leq 1+\frac1D$.
			\label{property:label-tree-cost}
		\end{enumerate}
	\end{theorem}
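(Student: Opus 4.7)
My plan is to adapt the LP-relaxation and top-down tree-rounding framework of \cite{GKL22} to this more abstract tree-labeling setting. First I would write a natural LP relaxation whose variables are $y_u^{\ell, \ell', \ell''} \geq 0$ for every internal node $u$ with children $v, v'$ and every consistent triple $(\ell, \ell', \ell'') \in \Gamma_u$, together with marginal variables $z_{u, \ell}$ linked by $\sum_{\ell', \ell''} y_u^{\ell, \ell', \ell''} = z_{u, \ell}$ and the analogous equations pinning $z_{v, \ell'}$ and $z_{v', \ell''}$, plus $\sum_{\ell \in L_u} z_{u, \ell} = 1$ at every $u$. I would add covering constraints $\sum_{u, \ell \in L_u \cap S_t} z_{u, \ell} \geq 1$ for each $t \in [k]$ and cost constraints $\sum_{u, \ell} c^i_\ell z_{u, \ell} \leq 1$ for each $i \in [m]$. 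The LP has $\poly(n) \cdot \Delta^3$ variables and constraints, and it is feasible by the assumed existence of a valid integral labeling.

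\textbf{Rounding.} Next I would round top-down on $\bfT$: sample $\ell_{\bfr}$ from the distribution $(z_{\bfr, \ell})_\ell$, and then for each internal node $u$ visited in BFS order, having already fixed $\ell_u$, jointly sample $(\ell_v, \ell_{v'})$ from $y_u^{\ell_u, \cdot, \cdot}/z_{u, \ell_u}$. A straightforward induction down the tree shows $\Pr[\ell_u = \ell] = z_{u, \ell}$ for every $u, \ell$, and by construction the output labeling is always consistent. The claimed $\poly(n) \cdot \Delta^{O(D)}$ runtime will come from setting up and solving the LP, with the $\Delta^{O(D)}$ overhead allowing for the enumeration needed in the analysis (e.g., over root-to-node label patterns).

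\textbf{Covering and cost.} For the covering guarantee \ref{property:label-tree-covering}, fix $t$ and let $X_u = [\ell_u \in S_t]$. The LP gives $\sum_u \E[X_u] \geq 1$, so pigeonholing over the at most $D$ depth levels yields some depth $d^*$ with $\sum_{u \text{ at depth } d^*} \E[X_u] \geq 1/D$. Since labels of nodes at depth $d^*$ in disjoint subtrees are sampled independently once their lowest common ancestor's label is fixed, a second-moment / Paley--Zygmund computation on $\sum_{u \text{ at depth } d^*} X_u$ then yields $\Pr[\exists u \text{ at depth } d^*: X_u = 1] \geq 1/D$. For the cost bound \ref{property:label-tree-cost}, set $\lambda = \ln(1 + 1/(2D))$, $W = \sum_u c^i_{\ell_u}$, and define bottom-up $G_u(\ell) = \E[\exp(\lambda \sum_{u' \in \Lambda^*_T(u)} c^i_{\ell_{u'}}) \mid \ell_u = \ell]$, so that $G_u(\ell) = e^{\lambda c^i_\ell} \sum_{\ell', \ell''} (y_u^{\ell, \ell', \ell''}/z_{u, \ell}) G_v(\ell') G_{v'}(\ell'')$ at an internal $u$. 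Using $e^{\lambda c} \leq 1 + (e^\lambda - 1) c$ pointwise and an inductive additive-budget invariant of the form $\sum_\ell z_{u, \ell} G_u(\ell) - 1 \leq (e^\lambda - 1)\sum_{u' \in \Lambda^*_T(u)} \sum_\ell z_{u', \ell} c^i_\ell$, together with the LP cost bound, I would conclude $\E[\exp(\lambda W)] = \sum_\ell z_{\bfr, \ell} G_\bfr(\ell) \leq 1 + 1/D$.

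\textbf{Main obstacle.} The trickiest step is the cost-mgf induction: conditional on a parent label, the two children's labels are \emph{not} independent (their joint distribution is pinned by $\Gamma_u$), so $\E[G_v(\ell_v) G_{v'}(\ell_{v'}) \mid \ell_u]$ does not factor into a product of the single-child mgfs. One must therefore propagate a \emph{pointwise} upper bound on $G_u(\ell)$ up the tree that respects the branching without incurring a $(1+\epsilon)^n$-type blowup and that calibrates precisely to $\lambda = \ln(1 + 1/(2D))$ and to the $1 + 1/D$ upper bound. A closely related subtlety is getting a \emph{single} sampled labeling to simultaneously deliver the covering and cost guarantees with the stated marginal probabilities, rather than taking fresh samples per group or per cost type.
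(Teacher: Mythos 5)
Your overall architecture (an LP over the labeling polytope, top-down sampling that preserves the LP marginals, and an mgf analysis calibrated to $\lambda=\ln(1+\tfrac1{2D})$) is in the right spirit, and you correctly flag the mgf induction as the crux. But there is a genuine gap: your LP is missing the \emph{subtree-local} constraints that make both analyses work, and as written both of your arguments fail on feasible fractional solutions. For covering: with only the global constraint $\sum_{u,\ell\in S_t} z_{u,\ell}\ge 1$, consider a root label $\ell_1$ with $z_{\bfr,\ell_1}=2^{-d}$ which (via $\Gamma$) forces every node at depth $d$ to receive a label in $S_t$, while the complementary label forces none to. The constraint is satisfied, yet $\Pr[\text{cover}]=2^{-d}=n^{-\Omega(1)}$, so pigeonhole plus Paley--Zygmund cannot give $1/D$ (the second moment at that level is $2^{d}$, and Paley--Zygmund returns exactly the true $2^{-d}$). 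For cost: the averaged invariant $\E[e^{\lambda W_u}]\le 1+(e^\lambda-1)\E[W_u]$ is simply false for sums — if a child reached with probability $\varepsilon$ deterministically incurs subtree cost $1/\varepsilon$ (allowed by your global constraint), then $\E[e^{\lambda W}]\ge \varepsilon e^{\lambda/\varepsilon}$, which blows past $1+\tfrac1D$. So the obstacle you name in your last paragraph is not a technicality to be smoothed over; it is fatal to the LP as you wrote it.

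The paper's fix is to strengthen the relaxation, at the price of the $\Delta^{O(D)}$ blowup you were hoping to confine to the analysis. It builds a super-tree $T^\circ$ of size $O(n)\Delta^{O(D)}$ whose nodes record entire label histories (alternating ``selector'' nodes, which choose one child, and ``copier'' nodes, which keep both), so that consistent labelings of $\bfT$ are exactly consistent subtrees of $T^\circ$. The LP then carries, for \emph{every} node $p$ of $T^\circ$, a packing constraint $\sum_{q\in\Lambda^*(p)\cap S'_t} y^t_q\le x_p$ (auxiliary variables $y^t$ selecting a single covering node per group, as in GKR) and a conditional cost constraint $\sum_{q\in\Lambda^*(p)} c^i_q x_q\le x_p$. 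These are precisely what your compact LP cannot express linearly, since conditioning on a history is a ratio of your variables. With them in place, both guarantees follow by bottom-up inductions on height: the covering probability of the subtree at $p$ is at least $\frac{1}{h+1}\cdot\frac{z_p}{x_p}$ where $h$ counts copier nodes, and $\E[e^{sW_p}\mid p\in V]\le \alpha_h^{w_p/x_p}$ with $\alpha_0=e^s$, $\alpha_h=e^{\alpha_{h-1}-1}$, where the pointwise exponent $w_p/x_p\le 1$ is exactly the local cost constraint. A single sample then satisfies both properties simultaneously, resolving your final concern.
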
	
	Property~\ref{property:label-tree-cost} gives a tail concentration bound on $\cost^i(\vec \ell)$. The remaining part of this section is dedicated to the proof of Theorem~\ref{thm:label-tree-main}. 
 
\subsection{Construction of a  super-tree $T^\circ$}	
	In this section, we construct a rooted tree $T^\circ = (V^\circ, E^\circ)$ of size $O(n)\Delta^{O(D)}$ such that a consistent labeling of $\bfT$ corresponds to what we call \emph{a consistent sub-tree}. 
    So we can reduce the problem to finding the latter object. The root of $T^\circ$ is $r$.  Each internal node of $T^\circ$ is either a \emph{selector node}, or a \emph{copier node}; their meanings  will be clear soon. Each node $p \in V^\circ$ is \emph{associated with} a node $u$ in $T$. Each non-root selector node or leaf node is \emph{associated with} a label $\ell \in L_u$.  We shall use $p$ and $q$ and their variants to denote nodes in $T^\circ$, and $u$ and $v$ and their variants to denote nodes in $\bfT$. 
	
	The algorithm for constructing $T^\circ$ is described in Algorithm~\ref{alg:construct-T-circ}, which calls the procedure $\mathsf{construct\mhyphen tree}$ described in Algorithm~\ref{alg:construct-tree}. See Figure~\ref{fig:tree} for the illustration of the construction of $T^\circ$ from $\bfT$. For a node $p \in V^\circ$, we use $\Lambda(p)$ denotes the set of children of $p$ in $T^\circ$, and $\Lambda^*(p)$ denotes the set of descendants of $p$ in $T^\circ$, including $p$ itself.

	\begin{algorithm}[H]
		\caption{Main algorithm for the construction of $T^\circ$}
		\label{alg:construct-T-circ}
		\begin{algorithmic}[1]
			\State create a node $r$ associated with $\bfr$ as the root of $T^\circ$, and let $r$ be a \emph{selector node}
			\For{every $\ell \in L_{\bfr}$}: 
				\State create a child $p$ of $r$, associated with node $\bfr$ and label $\ell$
				\State call $\mathsf{construct\mhyphen tree}(p, \bfr, \ell)$
			\EndFor
		\end{algorithmic}
	\end{algorithm}

	\begin{algorithm}[H]
		\caption{$\mathsf{construct\mhyphen tree}(p, u, \ell)$ \Comment{$p \in V^\circ, u \in \bfV, \ell \in L_u$}}
		\label{alg:construct-tree}
		\begin{algorithmic}[1]
			\If{$u$ has no children} \Return \Comment{$p$ is a leaf node.}\EndIf 
			\State let $p$ be a \emph{selector node}, 
	let $v$ and $v'$ be the two children of $u$ in $\bfT$
			\For{every $\ell' \in L_v, \ell'' \in L_{v'}$ such that $(\ell, \ell', \ell'') \in \Gamma_u$}
				\State create a child $p'$ of $p$, associated with $u$, let $p'$ be a \emph{copier node}, 
                    \State create two children $q$ and $q'$ of $p'$, associate $q$ with node $v$ and label $\ell'$, associate $q'$ with node $v'$ and label $\ell''$ 
				\State call $\mathsf{construct\mhyphen tree}(q, v, \ell')$ and $\mathsf{construct\mhyphen tree}(q', v', \ell'')$
			\EndFor
		\end{algorithmic}
	\end{algorithm}

    \begin{figure}
        \centering
        \iflipics
        \includegraphics[scale=0.81]{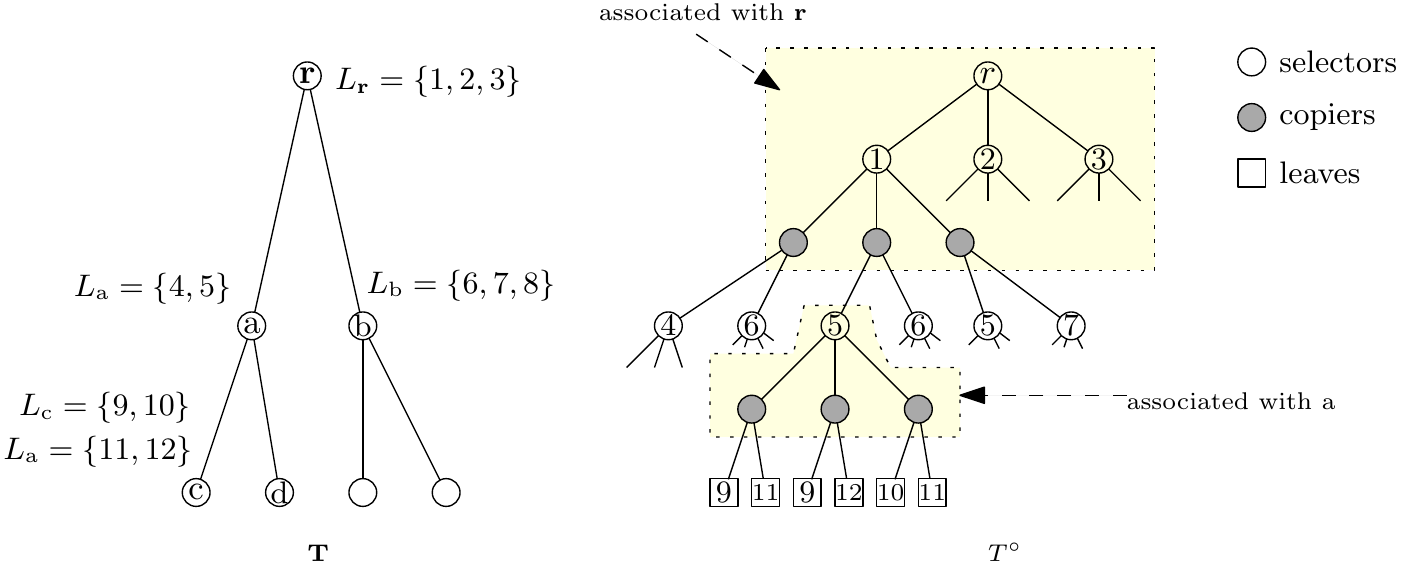}
        \else
        \includegraphics[scale=1]{tree.pdf}
        \fi
        \caption{An example for the construction of $T^\circ$. The tree on the left side is $\bfT$, and the tree on the right side is $\bfT^\circ$. The labels of the nodes in $\bfT$ are shown besides them. In $T^\circ$, selectors, copiers and leaves are denoted as empty circles, solid circles and empty squares respectively. The nodes in the two yellow polygons are associated with $\bfr$ and a respectively. The numbers in the circles and squares indicate the labels associated with the nodes. In the example, the triples in $\Gamma_\bfr$ with the first coordinate being $1$ are $(1,4, 6), (1, 5, 6)$ and $(1, 5, 7)$. The triples in $\Gamma_{\mathrm{a}}$ with the first coordinate being $5$ are $(5, 9, 11), (5, 9, 12)$ and $(5, 10, 11)$.}
        \label{fig:tree}
    \end{figure}
	
    Now we can define consistent sub-trees of $T^\circ$:
	\begin{definition}[Consistent sub-trees]
	Given a sub-tree $T$ of $T^\circ$ that contains $r$, we say $T$ is \emph{consistent} if the following conditions hold.
	\begin{itemize}[itemsep=0pt]
		\item Every selector node $p$ in $T$ has exactly one child in $T$.
		\item If $p$ is a copier node in $T$, then both of its children in $T^\circ$ are in $T$. 
	\end{itemize}
\end{definition}
The definition explains the names ``selector'' and ``copier'': a selector node $p$ in $T$ needs to select one of its children in $T^\circ$ and add it to $T$, and the children of a copier node $p$ will follow the node $p$ to enter $T$.  

It is easy to see a one-to-one correspondence between consistent labelings $\vec \ell = (\ell_u \in L_u)_{u \in \bfV}$ of $\bfT$, and consistent sub-trees $T$ of $T^\circ$.  Given the consistent labeling $\vec \ell$, the correspondent sub-tree $T$ of $T^\circ$ can be constructed as follows. First, we add $r$ and its child $p$ associated with label $\ell_\bfr$ to $T$. Then we grow the tree from $p$ using a recursive procedure.  Assume $p$ is associated with node $u$ in $\bfT$ and label $\ell \in L_u$. If $u$ is a leaf, we stop the procedure. Otherwise let $v$ and $v'$ be the two children of $u$, then we add the copier child $p'$ of $p$ that corresponds to the tuple $(\ell_\bfr, \ell_v, \ell_{v'})$ to $T$. We also add its two children $q$ and $q'$ to $T$. Then we run the procedure recursively over $q$ and $q'$.  Conversely, given a consistent sub-tree $T$ of $T^\circ$, we can recover a consistent labeling $\vec \ell$ of $\bfT$. 

For convenience, we extend the costs $(c^i_\ell)_{i \in [m], \ell \in L}$ to vertices in $V^\circ$: For every non-root selector node or leaf node  $p \in V^\circ$ associated with a label $\ell$, we define $c^i_p = c^i_\ell$ for every $i \in [m]$. For the root or a copier node $p$, we define $c^i_p = 0$.  For a consistent sub-tree $T = (V, E)$ of $T^\circ$, and $i \in [m]$, we define its type-$i$ cost to be $\cost^i(T) = \sum_{p \in V}c^i_p$.
This will be the same as $\cost^i(\vec \ell)$, for the labeling $\vec \ell$ correspondent to $T$.

We also extend the groups $S_1, S_2, \cdots, S_k$ to node sets in $T^\circ$: for every $t \in [k]$,  $S'_t$ contains the set of nodes $p \in V^\circ$ whose associated label is in $S_t$.  Then, a consistent labeling $\vec \ell$ covers a group $S_t$ if and only if the correspondent sub-tree $T = (V, E)$ covers $S'_t$, namely, $V \cap S'_t \neq \emptyset$.

Therefore, we are guaranteed that there is a consistent sub-tree $T^*$ of $T^\circ$ that covers all groups $S'_1, S'_2, \cdots, S'_k$, and has $\cost^i(T^*) \leq 1$ for every $i \in [m]$. Our goal is to output a random consistent sub-tree $T$ satisfying the conditions correspondent to \ref{property:label-tree-covering} and \ref{property:label-tree-cost}. This is done using an LP-based algorithm. 

\subsection{The LP relaxation for finding $T = (V, E)$}
Now we describe the LP relaxation that we use to find $T = (V, E)$.  For every vertex $p \in V^\circ$, we use $x_p$ to indicate if $p$ is in $T$, i.e., $p \in V$.   For every $t \in [k]$ and $q \in S'_t$, we use $y^t_q$ to indicate if $q$ is the node in $T$ we choose to cover $S'_t$.  There might be multiple nodes in $V \cap S'_t$, and in this case, we only choose one node in the set to cover $S'_t$; the choice can be arbitrary.  The LP is as follows.

\noindent\begin{minipage}[t]{0.54\textwidth}
	\begin{alignat}{2}
		x_r &= 1\label{LPC:root}\\
		\sum_{q \in \Lambda(p)} x_q  &= x_p &\qquad &\forall \text{ selector } p \in {V^\circ} \label{LPC:selector}\\
		x_q &= x_p &\qquad &\forall \text{ copier } p \in {V^\circ}, q \in \Lambda(p) \label{LPC:copier}\\
		x_p &\geq 0 &\qquad &\forall p \in V^\circ \label{LPC:non-negative}
	\end{alignat}
\end{minipage}\hfill
\begin{minipage}[t]{0.44\textwidth}
	\begin{alignat}{2}
		\sum_{q \in \Lambda^*(p) \cap S'_t}y^t_q &\leq x_p &\qquad &\forall p \in V^\circ, t \in [k]\label{LPC:group-packing}\\
		\sum_{q \in S'_t}y^t_q &= 1 &\qquad &\forall t \in [k] \label{LPC:group-covering}\\
		\sum_{q \in \Lambda^*(p)} c^i_q x_q &\leq x_p &\qquad &\forall p \in V^\circ, i \in [m] \label{LPC:cost}
	\end{alignat}
\end{minipage}

Constraints \eqref{LPC:root}-\eqref{LPC:non-negative} in the LP are for the consistency requirements. \eqref{LPC:root} says the root is always in $T$. \eqref{LPC:selector} says if a selector node $p$ is in $T$, then exactly one of its children is in $T$. \eqref{LPC:copier} says if a copier node $p$ is in $T$, and $q$ is a child of $p$, then $q$ is also in $T$. \eqref{LPC:non-negative} is the non-negativity condition. \eqref{LPC:group-packing} and \eqref{LPC:group-covering} deal with the covering requirements. \eqref{LPC:group-packing} says if $p$ is in $T$, then we choose at most one descendant of $p$ to cover the group $S'_t$; notice that the constraint implies $y^t_p \leq x_p$ if $p \in S'_t$.  \eqref{LPC:group-covering} says we choose exactly one node in $T$ to cover $S'_t$. \eqref{LPC:cost} handles the cost requirement: If $p$ is included in $T$, then the type-$i$ cost of the descendants of $p$ in $T$ is at most $1$.

\subsection{The rounding algorithm}
We solve LP\eqref{LPC:root} to obtain a solution $x \in [0, 1]^{V^\circ}$.  We add $r$ to $T$ and call $\mathsf{recursive\mhyphen rounding}(r)$ to obtain a sub-tree $T = (V, E)$. The procedure is defined in Algorithm~\ref{alg:recursive-rounding}.

\begin{algorithm}
	\caption{$\mathsf{recursive\mhyphen rounding}(p)$}
	\label{alg:recursive-rounding}
	\begin{algorithmic}[1]
		\If{$p$ is a selector node}
			\State choose one vertex $q \in \Lambda(p)$ randomly, so that $q$ is chosen with probability $\frac{x_q}{x_p}$
			\State add $q$ to $T$, and call $\mathsf{recursive\mhyphen rounding}(q)$
		\Else \Comment{$p$ is a copier or leaf node}
			\For{every $q \in \Lambda(p)$}
				\State with probability $\frac{x_q}{x_p} = 1$: add $q$ to $T$, and call $\mathsf{recursive\mhyphen rounding}(q)$
			\EndFor
		\EndIf
	\end{algorithmic}
\end{algorithm}

\begin{obs}
	$T$ is always consistent. For every $p \in V^\circ$, we have $\Pr[p \in V] = x_p$.
\end{obs}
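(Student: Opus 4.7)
The observation has two parts: consistency of $T$ and the marginal probability equality $\Pr[p \in V] = x_p$. My plan is to handle consistency by direct inspection of Algorithm~\ref{alg:recursive-rounding}, and to prove the probability equality by induction on the depth of $p$ in $T^\circ$, using LP constraints \eqref{LPC:root}, \eqref{LPC:selector}, and \eqref{LPC:copier} at each step.

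For consistency, I will argue as follows. A node $p$ is added to $V$ only when it is passed to $\mathsf{recursive\mhyphen rounding}$, and at every such invocation we immediately decide membership in $V$ of each of $p$'s children. If $p$ is a selector node, the algorithm selects exactly one child $q \in \Lambda(p)$ (with probabilities $x_q/x_p$, which sum to $1$ by~\eqref{LPC:selector}), so exactly one child of $p$ is in $T$. If $p$ is a copier node, then by~\eqref{LPC:copier} we have $x_q = x_p$ for every $q \in \Lambda(p)$, so the probability $x_q/x_p$ equals $1$ and both children are added to $V$. This exactly matches the two bullets in the definition of a consistent sub-tree.

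For the probability claim, I will induct on depth. The base case $p = r$ is immediate: the algorithm adds $r$ to $V$ deterministically, and $x_r = 1$ by~\eqref{LPC:root}. For the inductive step, let $p$ be a non-root node with parent $p'$. A node is only added to $V$ by a call initiated from its parent, so $\{p \in V\} \subseteq \{p' \in V\}$. Assuming $x_{p'} > 0$, the algorithm adds $p$ to $V$ conditional on $p' \in V$ with probability $x_p/x_{p'}$, regardless of whether $p'$ is a selector or a copier (in the copier case this ratio equals $1$ by~\eqref{LPC:copier}). Combining with the inductive hypothesis $\Pr[p' \in V] = x_{p'}$ gives $\Pr[p \in V] = x_{p'} \cdot (x_p/x_{p'}) = x_p$.

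The only mild subtlety is the degenerate case $x_{p'} = 0$, which I will dispose of separately: by~\eqref{LPC:selector} or~\eqref{LPC:copier} we have $x_p \le x_{p'} = 0$, and by induction $\Pr[p' \in V] = 0$, so $\Pr[p \in V] = 0 = x_p$. I do not expect any real obstacle; the entire argument is a short induction whose content is precisely that the per-step rounding rule is chosen to cancel the parent's marginal, and that the LP constraints ensure the selector probabilities form a valid distribution and the copier probabilities equal one.
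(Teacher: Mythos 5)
Your proof is correct and follows essentially the same approach as the paper's (very terse) proof: consistency by inspection of the two cases of the rounding procedure, and the marginal $\Pr[p\in V]=x_p$ by multiplying the per-step conditional probabilities down the tree, justified by constraints \eqref{LPC:root}--\eqref{LPC:copier}. Your additional handling of the degenerate case $x_{p'}=0$ is a fine touch but not a substantive difference.
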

\begin{proof}
	For a selector node $p$ in $T$, we always choose exactly one child of $p$ and add it to $T$.  For a copier node $p$ added to $T$ and one of its child $q$, $q$ is added to $T$ with probability 1. By the probabilities we add nodes to $T$, we can see that $\Pr[p \in V] = x_p$ for every $p \in V^\circ$. 
\end{proof}

\subsection{Analysis of probabilities of group coverage}
In this section, we fix $t \in [k]$ and analyze the probability that $T$ coves the group $S'_t$; or equivalently, the correspondent labeling covers the group $S_t$.  This will prove Property~\ref{property:label-tree-covering}. For every vertex $p \in V^\circ$, we define $z_p := \sum_{q \in \Lambda^*(p) \cap S'_t}y^t_q$, which indicates whether $S'_t$ is covered by vertices in the sub-tree rooted at $p$. By \eqref{LPC:group-packing}, we have $z_p \leq x_p$. By \eqref{LPC:group-covering}, we have $z_r = 1 = x_r$.  

We define the height of a node $p \in V^\circ$ to be the maximum number of copier nodes in a path from $p$ to one of its descendant leaves. We bound the probability that the tree rooted at $p$ covers $S'_t$ using inductions:
\begin{lemma}
    Assume $p \in V^\circ$ has height $h$. Then we have $\Pr\Big[\Lambda^*(p) \cap V \cap  S'_t \neq \emptyset \big| p \in V\Big] \geq \frac1{h+1}\frac{z_p}{x_p}$. 
\end{lemma}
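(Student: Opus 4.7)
My plan is to induct on the height $h$ of $p$, handling copier nodes and selector nodes within the same inductive step but always proving the copier case at height $h$ before the selector case at height $h$: a selector can have copier children of the same height, whereas a copier's children always have height at most $h-1$. For the base case $h=0$, the node $p$ must be a leaf (any non-leaf selector has a copier child and hence height at least $1$), so $\Lambda^*(p)=\{p\}$ and the group is covered iff $p\in S'_t$. When $p\in S'_t$ the LHS equals $1$ and $z_p=y^t_p\leq x_p$ by~\eqref{LPC:group-packing}; when $p\notin S'_t$ both sides vanish.

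For the copier step, let $p$ be a copier of height $h\geq 1$. Every child $q\in\Lambda(p)$ has height at most $h-1$ and satisfies $x_q=x_p$ by~\eqref{LPC:copier}; conditioned on $p\in V$ each child lies in $V$ with probability $1$ and the recursive roundings in the different subtrees are independent. The inductive hypothesis yields $\Pr[\Lambda^*(q)\cap V\cap S'_t\neq\emptyset\mid p\in V]\geq a_q:=z_q/(h\,x_p)$. A copier carries no associated label, so $p\notin S'_t$ and $z_p=\sum_q z_q$, whence $a:=\sum_q a_q = z_p/(h\,x_p)$. Combining independence with the elementary bound $\prod_q(1-a_q)\leq e^{-a}\leq 1/(1+a)$ (the right-hand step using $e^s\geq 1+s$) gives
\[
\Pr\bigl[\Lambda^*(p)\cap V\cap S'_t\neq\emptyset\mid p\in V\bigr]\ \geq\ \frac{a}{1+a}\ =\ \frac{z_p}{h\,x_p+z_p}\ \geq\ \frac{z_p}{(h+1)\,x_p},
\]
where the last inequality uses $z_p\leq x_p$ from~\eqref{LPC:group-packing}.

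For the selector step at height $h$, first suppose $p\in S'_t$: conditional on $p\in V$ the group is covered by $p$ itself, and the bound holds since $z_p\leq x_p$ forces $z_p/((h+1)x_p)\leq 1$. Otherwise $p\notin S'_t$, so $z_p=\sum_{q\in\Lambda(p)}z_q$, and the events ``child $q$ is the one selected and its subtree covers $S'_t$'' are disjoint across $q$. Each copier child $q$ has height at most $h$, so either the copier step just proved at height $h$ or the inductive hypothesis at smaller heights gives $\Pr[\Lambda^*(q)\cap V\cap S'_t\neq\emptyset\mid q\in V]\geq z_q/((h+1)\,x_q)$. Summing with weights $x_q/x_p$ (the probability of selecting $q$ given $p\in V$) yields $z_p/((h+1)\,x_p)$, completing the induction.

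The crux is the copier step: each of the independent children only contributes a $1/h$ factor, so some slack must be recovered at $p$ to reach the $1/(h+1)$ factor in the statement. The inequality $1-\prod(1-a_q)\geq a/(1+a)$ is what produces that slack, and the LP inequality $z_p\leq x_p$ is precisely what converts the denominator $h\,x_p+z_p$ into $(h+1)\,x_p$. The other subtle point is ordering the induction correctly, since selectors can have copier children of the same height and so the copier case at height $h$ must be in hand before the selector case at height $h$.
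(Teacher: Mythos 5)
Your proof is correct and follows essentially the same route as the paper's: the same bottom-up induction with heights counting copier nodes, the same weighted-average argument for selectors, and the same $1-\prod(1-a_q)\ge 1-e^{-a}$ trick for copiers, differing only cosmetically in the final copier inequality (you use $1-e^{-a}\ge a/(1+a)$ where the paper uses $1-e^{-\theta}\ge\theta-\theta^2/2$, both closing via $z_p\le x_p$). The only loose end is bookkeeping: the root's children are selectors of the same height (and a height-$0$ node need not be a leaf), so your "copiers at height $h$ before selectors at height $h$" ordering should really be phrased as structural induction from the leaves up, as the paper does — but this does not affect the substance of the argument.
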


\begin{proof}
	If $p \in S'_t$ then $\Pr\Big[\Lambda^*(p) \cap V \cap  S'_t \neq \emptyset \big|{p \in V}\Big] = 1 \geq \frac{z_p}{x_p}$. The inequality holds trivially. So, we can assume $p \notin S'_t$, and we prove the lemma for nodes $p$ from bottom to top in the tree $T^\circ$.  Suppose $p$ is a leaf; then $h = 0$, and $z_p = 0$ as we assumed $p \notin S'_t$. The inequality trivially holds. 

	So we can assume $p$ be a non-leaf node of height $h$, and assume the lemma holds for every $q \in \Lambda(p)$.  First assume $p$ is a selector node.  Then all children of $p$ have height at most $h$. 
	\begin{align*}
		\Pr\Big[\Lambda^*(p) \cap V \cap  S'_t \neq \emptyset \big| p \in V\Big] &\geq \sum_{q \in \Lambda(p)} \frac{x_q}{x_p}\cdot \frac{1}{h+1}\cdot\frac{z_q}{x_q} =  \sum_{q \in \Lambda(p)}  \frac{1}{h+1}\cdot\frac{z_q}{x_p} = \frac1{h+1} \cdot \frac{z_p}{x_p}.
	\end{align*}
	
	Then consider the case that $p$ is a copier node. All children of $p$ have height at most $h-1$.  Even though $p$ has exactly two children, our analysis works if it has any number of children. 
	\begin{align*}
		 \Pr&\Big[\Lambda^*(p) \cap V \cap  S'_t \neq \emptyset \big| p \in V\Big] \geq 1-\prod_{q \in \Lambda(p)}\left(1 - \frac{1}{h} \cdot \frac{z_q}{x_q}\right) = 1 - \prod_{q \in \Lambda(p)}\exp\left(-\frac{1}{h} \cdot \frac{z_q}{x_p}\right)\\ 
        &= 1 - \exp\left(-\frac{1}{h}\cdot  \frac{z_p}{x_p}\right) \geq \frac{1}{h} \cdot \frac{z_p}{x_p} - \frac12\left(\frac{1}{h} \cdot \frac{z_p}{x_p}\right)^2 \geq \frac{1}{h}\cdot \frac{z_p}{x_p} - \frac12\left(\frac{1}{h} \right)^2\frac{z_p}{x_p} \\
        &= \left(\frac{2h-1}{2h^2}\right)\frac{z_p}{x_p} \geq \frac{1}{h+1}\cdot \frac{z_p}{x_p}.
	\end{align*}
	The first equality in the first line used that $x_q = x_p$ for every $q \in \Lambda(p)$. The second equality used that $z_p = \sum_{q \in \Lambda(p)}z_q$ as $p \notin S'_t$. The first inequality in the second line used that $e^{-\theta} \leq 1-\theta + \frac {\theta^2}2$ for every $\theta \geq 0$.   The second inequality used that $\frac{z_p}{x_p} \leq 1$.
\end{proof}
Notice that the height of the root $r$ of $T^\circ$ is $D-1$. Applying the above lemma with $p = r$, we have that $T$ covers group $S'_t$ with probability at least $\frac{1}{D} \cdot \frac{z_r}{x_r} = \frac{1}{D}$. So, the correspondent $\vec \ell$ covers $S_t$ with probability at least $\frac1D$, proving Property~\ref{property:label-tree-covering}.

\subsection{Concentration bound on costs}
\label{appendix:concentration}
In this section, we prove Property~\ref{property:label-tree-cost}. To this end, we fix an index $i \in [m]$ and analyze the type-$i$ cost of $T = (V, E)$. For notation convenience, we use $c_p$ to denote $c^i_p$, and cost for type-$i$ cost.

For every vertex $p \in V^\circ$, let $w_p = \sum_{q \in \Lambda^*(p)} c_q x_q$ be the fractional cost incurred by the sub-tree of $T^\circ$ rooted at $p$.  By \eqref{LPC:cost}, we have $w_p \leq x_p$.  Let $W_p = \sum_{q \in \Lambda^*(p) \cap V} c_q$ be the cost of $T$ incurred by descendants of $p$. So, we have  $\E[W_p] = w_p$.

As is typical, we shall introduce a parameter $s > 0$ and consider the expectation of the random exponential variables $\bfe^{s W_p}$.  Later we shall set $s = \ln(1+\frac1{2D})$, but the main lemma holds for any $s > 0$. We define an $\alpha_h$ for every integer $h \geq 0$ as 
$\alpha_0 = \bfe^s$ and $\alpha_h = \bfe^{\alpha_{h-1}-1}, \forall h \geq 1$.
Notice that $\alpha_0, \alpha_1, \ldots$ is an increasing sequence.

In this section, we count selector nodes in the definition of heights: the height of a node $p \in V^\circ$ is the maximum number of selector nodes in a path from $p$ to its descendant leaf.   The main lemma we prove in this section is:
	\begin{lemma}
		\label{lemma:bound-exp-mp}
		For any node $p$ in $T^\circ$ of height $h$, we have 
		$
		\E\Big[\bfe^{s W_p} \big| p \in V \Big] \leq \alpha_h^{w_p/x_p}.
		$
	 \end{lemma}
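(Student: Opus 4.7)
The plan is to prove this by induction on the height $h$ of $p$, writing $\rho_q := w_q/x_q \in [0,1]$ (the upper bound coming from the LP constraint \eqref{LPC:cost}) and aiming to show $\E[\bfe^{s W_p} \mid p \in V] \leq \alpha_h^{\rho_p}$. The two structural facts I will lean on are: (i) for a copier $p$, the recursive calls on its children are independent, so conditional on $p \in V$ the random variables $W_q$ for $q \in \Lambda(p)$ are mutually independent and the exponential moment factorizes; (ii) for a selector $p$, exactly one child $q^{*}$ ends up in $V$, chosen with $\Pr[q^{*} = q \mid p \in V] = x_q/x_p$.

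For the base case $h = 0$, $p$ is either a leaf or a copier whose descendants are all leaves, and $W_p$ is deterministic given $p \in V$, yielding $\E[\bfe^{s W_p} \mid p \in V] = \bfe^{s \rho_p} = \alpha_0^{\rho_p}$ by a short calculation using $c_p = 0$ on copiers and $x_q = x_p$ on copier children. For the inductive step, the copier case is immediate: by independence and $c_p = 0$,
\[
\E[\bfe^{s W_p} \mid p \in V] = \prod_{q \in \Lambda(p)} \E[\bfe^{s W_q} \mid q \in V] \leq \prod_q \alpha_{h}^{\rho_q} = \alpha_h^{\sum_q \rho_q} = \alpha_h^{\rho_p},
\]
using the inductive hypothesis (children of a copier have height $\leq h$) together with $\rho_p = \sum_q \rho_q$, which follows from $w_p = \sum_q w_q$ and $x_q = x_p$ for a copier, and the monotonicity of the sequence $\{\alpha_h\}$.

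The selector case is the heart of the argument and where I expect the main obstacle. Writing
\[
\E[\bfe^{s W_p} \mid p \in V] = \bfe^{s c_p} \sum_{q \in \Lambda(p)} \frac{x_q}{x_p}\, \E[\bfe^{s W_q} \mid q \in V] \leq \bfe^{s c_p} \sum_q \frac{x_q}{x_p}\, \alpha_{h-1}^{\rho_q}
\]
(children of a selector have height $\leq h-1$, since $p$ itself contributes $1$ to the count), I would bound the convex function $\rho \mapsto \alpha_{h-1}^{\rho}$ on $[0,1]$ by its chord $\alpha_{h-1}^{\rho} \leq 1 + (\alpha_{h-1} - 1)\rho$, and combine this with the identity $\sum_q (x_q/x_p)\rho_q = \rho_p - c_p$ (from $w_p = c_p x_p + \sum_q w_q$) to reduce the right-hand side to $\bfe^{s c_p}\bigl(1 + (\alpha_{h-1}-1)(\rho_p - c_p)\bigr)$. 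Applying $1 + x \leq \bfe^x$ to the parenthetical absorbs it into an exponential, producing $\bfe^{s c_p + (\alpha_{h-1}-1)(\rho_p - c_p)}$, which is at most $\bfe^{(\alpha_{h-1}-1)\rho_p} = \alpha_h^{\rho_p}$ exactly when $s \leq \alpha_{h-1} - 1$.

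The delicate step is verifying this last inequality without introducing slack: it reduces to $s \leq \bfe^s - 1 = \alpha_0 - 1 \leq \alpha_{h-1} - 1$, which holds for all $h \geq 1$ by the elementary bound $\bfe^s \geq 1+s$ and the monotonicity of $\{\alpha_h\}$. The real technical care goes into the selector step, because both the chord linearization and the $1+x \leq \bfe^x$ step are tight only at the endpoints; the whole argument is engineered precisely so that these two losses compound into the recursion $\alpha_h = \bfe^{\alpha_{h-1}-1}$ with no leftover slack. Any weaker bound for one of the two would ripple up the tower and prevent us from closing the induction.
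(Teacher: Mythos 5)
Your proposal is correct and follows essentially the same route as the paper: a bottom-to-top induction over $T^\circ$ with the copier case handled by independence and $w_p = c_px_p + \sum_{q}w_q$, and the selector case handled by the chord bound on the convex map $\theta \mapsto \alpha_{h-1}^{\theta}$, the identity $\sum_{q}(x_q/x_p)(w_q/x_q) = w_p/x_p - c_p$, and $1+\theta \leq \bfe^{\theta}$, with the final absorption of $\bfe^{sc_p}$ resting on exactly the same inequality $\bfe^{s} = \alpha_0 \leq \alpha_h$ (equivalently $s \leq \alpha_{h-1}-1$) that the paper uses. The only cosmetic difference is that you isolate an explicit height-$0$ base case and phrase the argument as induction on $h$ rather than structural induction on the tree (needed since a copier's children can share its height), but the content is identical.
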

 
\begin{proof}
	Again, we prove the lemma for nodes $p$ from bottom to top of the tree $T^\circ$.  Focus on a node $p$ of height $h$.  Consider the case where $p$ is a copier or leaf node. Then all children of $p$ has height at most $h$. 
	\begin{align*}
		\E\Big[\bfe^{sW_p}\big|p \in V\Big]  &= \bfe^{sc_p} \prod_{q \in \Lambda(p)}\E \Big[\bfe^{sW_q} \big| q \in V \Big] =\alpha_0^{c_px_p/x_p} \prod_{q \in \Lambda(p)}\alpha_h^{w_q/x_p} \leq \alpha_h^{c_px_p/x_p} \prod_{q \in \Lambda(p)}\alpha_h^{w_q/x_p}=\alpha_h^{w_p/x_p}.
	\end{align*}
	The last inequality used that $\alpha_0 \leq \alpha_h$, and the last equality used that $w_p = c_p x_p + \sum_{q \in \Lambda(p)} w_q$.

	Now suppose $p$ is a selector. Then all children of $p$ have height at most $h-1$.  Conditioned on $p \in V$, the rounding procedure adds exactly one child $q$ of $p$ to $V$.  Then, we have 
	\begin{align*}
		\E\Big[\bfe^{sW_p}\big|p \in V\Big]  &= \bfe^{sc_p}\cdot \sum_{q \in \Lambda(p)}\frac{x_q}{x_p} \E\Big[\bfe^{sW_q}\big|q \in V\Big] 
		= \bfe^{sc_p} \cdot \sum_{q \in \Lambda(p)}\frac{x_q}{x_p}\alpha_{h-1}^{w_q/x_q}\\
		&\leq \bfe^{sc_p} \left(\left(\frac{w_p}{x_p} - c_p\right)\cdot \alpha_{h-1} + \left(1-\frac{w_p}{x_p} + c_p\right)\right) = \bfe^{sc_p}\left(1 + \left(\frac{w_q}{x_q} - c_p\right)(\alpha_{h-1} - 1)\right)\\
		&\leq \bfe^{sc_p}\cdot \exp\left(\left(\frac{w_p}{x_p} - c_p\right)(\alpha_{h-1} - 1)\right) = \bfe^{sc_p} \cdot \alpha_h^{w_p/x_p - c_p} \leq \alpha_h^{w_p/x_p}.
	\end{align*}
	
	To see  the inequality in the second line, we notice the following four facts: (i) $\alpha_{h-1}^\theta$ is a convex function of $\theta$, (ii) $w_q/x_q \in [0, 1]$ for every $q \in \Lambda(p)$,  (iii) $\sum_{q \in \Lambda(p)}\frac{x_q}{x_p} = 1$ and (iv) $\sum_{q\in \Lambda(p)}\frac{x_q}{x_p}\cdot\frac{w_q}{x_q} = \sum_{q\in \Lambda(p)}\frac{w_q}{x_p} = \frac{w_p}{x_p} - c_p$.  The equality in the last line is by the definition of $\alpha_h$.   The last inequality used that $\bfe^s = \alpha_0 \leq \alpha_h$.
\end{proof}

The height of the root $r$ is $D$.\footnote{The height of $r$ is $D+1$ by definition, but Lemma~\ref{lemma:bound-exp-mp} holds when we define its height to be $D$, as one can collapse the first two levels of $T^\circ$ into one level.}  Now, we set $s = \ln(1+\frac{1}{2D})$. We prove inductively the following lemma:
\begin{lemma}
    \label{lemma:bound-alpha}
    For every $h \in[0, D]$, we have $\alpha_h \leq 1 + \frac{1}{2D - h}$.
\end{lemma}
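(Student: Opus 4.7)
The plan is a straightforward induction on $h$, with the key analytic tool being the elementary inequality $\bfe^{x} \leq \frac{1}{1-x}$, valid for all $x \in [0, 1)$.

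For the base case $h = 0$, by our choice $s = \ln(1 + \frac{1}{2D})$ we have $\alpha_0 = \bfe^s = 1 + \frac{1}{2D}$, which matches the claimed bound $1 + \frac{1}{2D - 0}$ with equality. For the inductive step, I assume $\alpha_{h-1} \leq 1 + \frac{1}{2D - h + 1}$ and want to deduce $\alpha_h \leq 1 + \frac{1}{2D - h}$. Since $\bfe^{\cdot}$ is monotone, the recursion $\alpha_h = \bfe^{\alpha_{h-1} - 1}$ gives $\alpha_h \leq \bfe^{1/(2D-h+1)}$. Now for $h \in [1, D]$ we have $2D - h + 1 \geq D + 1 \geq 2$, so $\frac{1}{2D-h+1} \in (0, 1)$, and we may apply $\bfe^{x} \leq \frac{1}{1-x}$ to conclude
\begin{align*}
\alpha_h \;\leq\; \frac{1}{1 - \frac{1}{2D - h + 1}} \;=\; \frac{2D - h + 1}{2D - h} \;=\; 1 + \frac{1}{2D - h},
\end{align*}
completing the induction.

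There is essentially no obstacle here; the only thing one needs to check is that the argument of the exponential stays safely below $1$ throughout the inductive range $h \in [0, D]$, which holds since $\alpha_{h-1} - 1 \leq \frac{1}{2D - h + 1} \leq \frac{1}{D + 1} < 1$. The calibration of the initial value $s = \ln(1 + \frac{1}{2D})$ is exactly what makes the induction telescope cleanly: the denominator decreases by one at each level of the tree, so after $D$ levels we still have $\alpha_D \leq 1 + \frac{1}{D}$, which will be small enough to feed into the subsequent moment-generating-function bound for Property~\ref{property:label-tree-cost}.
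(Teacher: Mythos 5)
Your proof is correct and follows the same induction as the paper, with the identical base case $\alpha_0 = 1 + \frac{1}{2D}$ and the same telescoping of the denominator. The only (cosmetic) difference is the elementary bound used in the inductive step: you apply $\bfe^{x} \leq \frac{1}{1-x}$, which yields $1 + \frac{1}{2D-h}$ exactly, whereas the paper uses $\bfe^{\theta} \leq 1 + \theta + \theta^2$ and then checks one extra small inequality; both are valid.
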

\begin{proof}
	By definition, $\alpha_0 = \bfe^s = 1+ \frac{1}{2D}$ and thus the statement holds for $h = 0$.  Let $h \in [1, D]$ and assume the statement holds for $h-1$. Then, we have 
	\begin{align*}
		\alpha_h &= \bfe^{\alpha_{h-1}-1} \leq \bfe^{1 + \frac{1}{2D-h+1}} \leq 1 + \frac{1}{2D-h+1} + \left(\frac{1}{2D-h+1}\right)^2\\
		&= 1 + \frac{2D-h+2}{(2D-h+1)^2} \leq 1 + \frac{1}{2D - h}.
	\end{align*}
	The first inequality used the induction hypothesis and the second one used that for every $\theta \in [0, 1]$, we have $e^\theta \leq 1 + \theta + \theta^2$. 
\end{proof}

To wrap up, we apply Lemma~\ref{lemma:bound-exp-mp} on $p = r$. Notice that $r \in V$ always happens, at $W_r = \cost^i(T)$. We have $\E\big[\exp\big(\ln\big(1+\frac1{2D}\big) \cdot \cost^i(T)\big)\big] \leq \alpha_D^{w_r/x_r} \leq 1 + \frac1D$ by Lemma~\ref{lemma:bound-alpha} and that $w_r \leq x_r = 1$.  Using the correspondence between sub-trees of $T^\circ$ and labelings of $\bfT$ proves Property~\ref{property:label-tree-cost}.

\section{Reduction of Degree-Bounded Group Steiner Tree on Bounded-Treewidth Graphs to Tree-Labeling Problem} \label{sec:reduction}

In this section we prove Theorem~\ref{thm:GST-bd-treewidth}, by reducing Group Steiner Tree with degree bounds on bounded treewidth graphs to the tree labeling problem studied in Section~\ref{sec:tree-labeling}. Recall the input of the problem contains a graph $G = (V, E)$ with edge costs $c \in R_{\geq 0}^E$, a root $r$,  $k$ groups $S_1, S_2, \cdots, S_k$ of vertices, and a degree bound $\db_v \in \Z_{>0}$ for every $v \in V$.   
Without loss of generality, we assume $\{r\}, S_1, S_2, \cdots, S_k$ are mutually disjoint. Again, we use $\opt$ to denote the minimum-cost of a valid subgraph $H$.

Let $\bfT = (B, \bfE)$ be the tree decomposition of the graph $G = (V, E)$.  Every $b \in B$ is called a bag and let $X_b \subseteq V$ be the set of vertices contained in the bag $b$. We can add the root $r$ to all the bags, which increases the maximum size of a bag by at most 1. It was show in \cite{Bod88} that we can assume $\bfT$ is a rooted binary tree of depth $O(\log n)$, by sacrificing the bag size by an $O(1)$ factor. We summarize the properties as follows:
\begin{itemize}
	\item $\bfT$ is a full binary tree rooted at $\bfr$, with depth $O(\log n)$.
	\item $|X_b| \leq O(1)\cdot \tw$ for every $b \in B$.
	\item For every edge $(u, v) \in E$, there is some $b \in B$ with $\{u, v\} \subseteq X_ b$. 
	\item For every $v \in V$, the set of bags $b$ with $v \in X_b$ is connected in $\bfT$. 
\end{itemize}

For every $e \in E$, let $b_e$ be the highest node $b$ such that $X_b$ contains both end vertices of $e$. This is well-defined due to the last property in the above list.  For every $b \in B$, we let $E_b = \{e \in E: b_e = b\}$. So, $(E_b)_{b \in B}$ forms a partition of $E$. 

\subparagraph{Notations on Partitions.}
Given two partitions $\Pi$ and $\Pi'$ of a common set $X$,  we say $\Pi'$ refines $\Pi$ if any two elements in $X$ that are in the same set in $\Pi'$ are also in the same set in $\Pi$.  We use $\Pi' \leq \Pi$ to denote that $\Pi'$ refines $\Pi$.  Given two partitions $\Pi$ and $\Pi'$ of $X$, we use $\Pi \vee \Pi'$ to denote the join of $\Pi$ and $\Pi'$ w.r.t the relation $\leq$.  That is, we define a graph where there is an edge between $u$ and $v$ if they are in the same set in $\Pi$ or $\Pi'$. Then two vertices $u$ and $v$ are in the same set in the partition $\Pi \vee \Pi'$ if and only if they are in the same connected component in the graph. 

Abusing notations slightly, if an element $v$ is not included in a partition $\Pi$, we treat $\{v\}$ as a singleton set in $\Pi$. This allows us to extend the operators $\leq$ and $\vee$ to two partitions $\Pi$ and $\Pi'$ with different ground sets.   Given a partition $\Pi$ and a set $X$, we let $\Pi[X]$ be the partition $\Pi$ restricted to the ground set $X$: two elements $u, v \in X$ are in the same set in $\Pi[X]$ if and only if they are in the same set in $\Pi$.

For any set $F \subseteq E$ of edges, we define $\CC(F)$ to be the partition of the vertices incident to $F$, such that $u$ and $v$ are in the same set in $\CC(F)$ if and only if they are in the same connected component in $(V, F)$.

\subparagraph{Construction of Labels and Consistency Triples.}
The tree $\bfT$ for the tree-labeling instance is the same as the decomposition tree $\bfT$. (This is the reason we use the same notion $\bfT$.) So we have $\bfV = B$. Now we fix a bag $b \in B$ and define the set $L_b$ of labels for $b$.  
To define the labels, we let $H = (V_H, E_H)$ be any sub-graph of $G$, which we should think of as the output of the GST problem. Fix a bag $b \in B$, let $\Lambda^*(b)$ be the set of descendants of $b$ in $\bfT$, including $b$ itself. We then make the following definitions: 
\begin{itemize}
	\item $F_b(H) := E_H \cap E_b$ is the set of edges from $E_b$ that are included in $H$. 
	\item $\Pi^\downarrow_b(H)$ is the partition of $X_b$ so that two vertices $u,  v \in X_b$ is in the set in $\Pi^\downarrow_b(H)$ if and only if they are connected in the graph $(V_H, E_H \cap \bigcup_{b' \in \Lambda^*(b)} E_{b'})$.
	\item $\Pi^\uparrow_b(H)$ is the partition of $X_b$ so that two vertices $u,  v \in X_b$ is in the set in $\Pi^\downarrow_b(H)$ if and only if they are connected in the graph $(V_H, E_H \cap \bigcup_{b' \in B \setminus \Lambda^*(b) \cup \{b\}} E_{b'})$.
\end{itemize}
In words, $\Pi^\downarrow_b(H)$ and $\Pi^\downarrow_b(H)$ respectively indicate the partition of $X_b$ correspondent to the edges of $H$ in bags below and above $b$ respectively.  

Without knowing $H$, we can define the label set $L_b$ for $b$ to be all tuples $(F_b, \Pi^\downarrow_b, \Pi^\uparrow_b)$ such that $(F_b, \Pi^\downarrow_b, \Pi^\uparrow_b) = (F_b(H), \Pi^\downarrow_b(H), \Pi^\uparrow_b(H))$ for some valid output graph $H$.  We then define the consistency tuples $\Gamma_b$'s so that a consistent labeling gives a valid outputs sub-graph $H$.  \medskip

Formally, let $L_b$ be the set of all tuples $(F_b, \Pi^\downarrow_b, \Pi^\uparrow_b)$ such that 
\begin{itemize}
	\item $F_b \subseteq E_b$ is a forest over $X_b$, $\CC(F_b) \leq \Pi^\downarrow_b$  and $\CC(F_b) \leq \Pi^\uparrow_b$,
	\item if $b = \bfr$, then $\Pi^\uparrow_b = \CC(F_b)$, and 
	\item if $b$ is a leaf, then $\Pi^\downarrow_b = \CC(F_b)$. 
\end{itemize}

Then we define the set $\Gamma_b$ of triples,  for an inner vertex $b$ in $\bfT$ with two children $b'$ and $b''$.  We have $\big((F_b, \Pi^\downarrow_b, \Pi^\uparrow_b), (F_{b'}, \Pi^\downarrow_{b'}, \Pi^\uparrow_{b'}), (F_{b''}, \Pi^\downarrow_{b''}, \Pi^\uparrow_{b''})\big) \in \Gamma_b$ if and only if 
\begin{itemize}
	\item $\Pi^\downarrow_b = \Big(\Pi^\downarrow_{b'} \vee \Pi^\downarrow_{b''} \vee \CC(F_b)\Big)[X_b]$,
	\item $\Pi^\uparrow_{b'} = \Big(\Pi^\uparrow_{b} \vee \Pi^\downarrow_{b''}  \vee \CC(F_{b'})\Big)[X_b]$, and
	\item $\Pi^\uparrow_{b''} = \Big(\Pi^\uparrow_{b} \vee \Pi^\downarrow_{b'}  \vee \CC(F_{b''})\Big)[X_{b''}]$.
\end{itemize}

\begin{claim}\label{claim:connectivity-truthful}
	Let $\{(F_b, \Pi^\downarrow_b, \Pi^\uparrow_b)\}_{b \in B}$ be a consistent labeling of the tree $\bfT$.  Let $H = (V, \union_{b \in B} F_b)$.  Then we have $\Pi^\downarrow_b[H] = \Pi^\downarrow_b$ and $\Pi^\uparrow_b[H] = \Pi^\uparrow_b$ for every $b \in B$. 
\end{claim}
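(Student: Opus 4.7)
The plan is to prove the two equalities by a double induction: a bottom-up induction on $\bfT$ for $\Pi^\downarrow_b(H) = \Pi^\downarrow_b$, and, once that is finished, a top-down induction for $\Pi^\uparrow_b(H) = \Pi^\uparrow_b$. In both cases the base case is handled directly by the label-validity conditions: if $b$ is a leaf then $\Lambda^*(b) = \{b\}$, so the edges defining $\Pi^\downarrow_b(H)$ are exactly $F_b$ and $\Pi^\downarrow_b(H) = \CC(F_b) = \Pi^\downarrow_b$; if $b = \bfr$ then the index set $(B \setminus \Lambda^*(b)) \cup \{b\}$ collapses to $\{\bfr\}$, so $\Pi^\uparrow_\bfr(H) = \CC(F_\bfr) = \Pi^\uparrow_\bfr$.

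For the downward inductive step, suppose $b$ has children $b'$ and $b''$ and the claim holds for both. Because $\{E_c\}_{c \in B}$ is a partition of $E$, the edges of $H$ indexed by $\Lambda^*(b)$ split disjointly into $F_b$, the $b'$-subtree edges, and the $b''$-subtree edges. I want to show that, restricted to pairs in $X_b$, connectivity in this combined edge set agrees with being linked by a chain in $X_b$ whose consecutive links come from $\CC(F_b)$, $\Pi^\downarrow_{b'}(H)$, or $\Pi^\downarrow_{b''}(H)$; by the inductive hypothesis and the consistency triple at $b$, this chain-relation is exactly $\Pi^\downarrow_b$. One direction is immediate: a chain yields a path by concatenating the inductive witnesses. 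For the converse, take $u,v \in X_b$ connected by a path and chop it into maximal ``monochromatic'' runs (edges from one of the three groups). The crucial geometric step is to show that every transition vertex between two runs, as well as $u$ and $v$ themselves, lies in $X_b$: a vertex incident to edges from two different groups belongs to bags in both of the corresponding subtrees, and by the bag-connectivity axiom of the tree decomposition it must also lie in every bag on the path in $\bfT$ joining those two bags; since $b$ is on that path, the vertex is in $X_b$, and by the same argument it lies in $X_{b'}$ (resp.\ $X_{b''}$) whenever an adjacent run uses $b'$-subtree (resp.\ $b''$-subtree) edges. A run inside the $b'$-subtree therefore has both endpoints in $X_{b'}$, and they are connected by $b'$-subtree edges, so the inductive hypothesis places them in the same block of $\Pi^\downarrow_{b'}$; runs in $F_b$ and in the $b''$-subtree are treated symmetrically, and stitching across runs produces the desired chain.

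The upward inductive step is entirely parallel. For $b$ with parent $b^*$ and sibling $b^{**}$, the edges defining $\Pi^\uparrow_b(H)$ split disjointly into $F_b$, the edges ``above-or-at $b^*$'', and the edges ``below-or-at $b^{**}$''. The same run-chopping and bag-connectivity argument, now invoking the top-down inductive hypothesis at $b^*$ and the already-proved downward claim at $b^{**}$, gives $\Pi^\uparrow_b(H) = (\Pi^\uparrow_{b^*} \vee \Pi^\downarrow_{b^{**}} \vee \CC(F_b))[X_b]$, which equals $\Pi^\uparrow_b$ by the consistency triple at $b^*$. The one part that requires genuine care, shared by both inductions, is the bag-connectivity bookkeeping: identifying the three disjoint edge groups unambiguously and checking that every ``join vertex'' on a path is pushed by the tree-decomposition axioms into exactly the right bag. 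Everything else is routine concatenation of inductive witnesses.
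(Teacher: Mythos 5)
Your proof is correct. The paper actually states Claim~\ref{claim:connectivity-truthful} without supplying any proof, and your argument is precisely the natural one the authors implicitly rely on: a bottom-up induction for $\Pi^\downarrow$ and a top-down induction for $\Pi^\uparrow$ (using the already-established downward statement at the sibling), with the only nontrivial point being exactly the one you isolate — that the bag-connectivity axiom of the tree decomposition forces every transition vertex between runs (and the endpoints $u,v$) into the bags $X_b$, $X_{b'}$, $X_{b''}$ (resp.\ $X_{b^*}$, $X_{b^{**}}$) needed to realize the chain in the join $\Pi^\downarrow_{b'}\vee\Pi^\downarrow_{b''}\vee\CC(F_b)$ (resp.\ $\Pi^\uparrow_{b^*}\vee\Pi^\downarrow_{b^{**}}\vee\CC(F_b)$).
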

The claim says that if the labels are consistent, then $\Pi^\downarrow_b$ and $\Pi^\uparrow_b$ represent their true values.

\subparagraph{Construction of Covering and Cost Constraints.}  
The requirement that all groups are connected to $r$ can be captured by the covering constraint in the tree-labeling problem. For every $t \in [k]$, a label $(F_b, \Pi^\downarrow_b, \Pi^\uparrow_b) \in L_b$ for some $b \in B$ can satisfy the group $S_t$ if for some $s \in S_t$ we have $(s, r)$ are in the same set in the partition $\Pi^\downarrow_b \vee \Pi^\uparrow_b$.

The edge costs and degree constraints can be captured by the cost constraints in the tree-labeling instance. Consider the costs first.  Using binary search, we assume we know the optimum cost $C^*$ for the instance.  For every bag $b \in B$ and every label $(F_b, \Pi^\downarrow_b, \Pi^\uparrow_b)$, the cost of the label is $c(F_b) := \sum_{e \in F_b} c_e$. We disallow this label by removing it if $c(F_b) > C^*$. Scaling all costs by $C^*$ so that all costs are in $[0, 1]$.  So, the cost being at most $C^*$ in the group Steiner tree instance is equivalent to that the cost of all labels is at most $1$. 

Finally we consider the degree constraints $d_H(v) \leq \db_v$ for every $v \in V$. For every $v \in V$, we define a cost constraint in the tree-labeling instance. For every bag $b \in B$ with $v \in X_b$, and every label $(F_b, \Pi^\downarrow_b, \Pi^\uparrow_b)$, the cost of the label is $|\delta(v) \cap F_b|$, where $\delta(v)$ is the incident edges of $v$ in $G$. Again, we disallow the label if $|\delta(v) \cap F_b| > \db_v$, and we scale the costs by $\db_v$ so that all costs are in $[0, 1]$. Then the degree constraint on $v$ is reduced to this cost requirement in the tree labeling instance.

\subparagraph{Wrapping Up.} We then run the algorithm in Theorem~\ref{thm:label-tree-main} on the constructed tree-labeling instance.  Let $(F_b, \Pi^\downarrow_b, \Pi^\uparrow_b)$ be the label of a bag $b$, and let $H = (V, \union_{b \in B}F_b)$.  By Claim~\ref{claim:connectivity-truthful}, the consistency constraints guarantee that the $\Pi^\downarrow_b$ and $\Pi^\uparrow_b$ truthfully represent the connectivity of the graph $G$.  So, if the covering constraint for a group $S_t$ is satisfied, then $H$ indeed connects $r$ and $S_t$.  Recall that $D = O(\log n)$ is the depth of the tree $\bfT$. By Properties~\ref{property:label-tree-covering} and \ref{property:label-tree-cost}, we have 
\begin{itemize}
	\item For every $t \in [k]$, $H$ connects $r$ and $S_t$ with probability at least $\frac{1}{D}$.
	\item $\E\big[\exp(\ln (1+\frac1{2D}) \cdot \frac{c(H)}{C^*})\big] \leq 1 + \frac1D$.
	\item $\E\big[\exp(\ln (1+\frac1{2D}) \cdot \frac{d_H(v)}{\db_v})\big] \leq 1 + \frac1D$ for every $v \in V$. 
\end{itemize}

We run the algorithm for $M = \Theta(D\log n) = \Theta(\log^2 n)$ times, with a large hidden constant in the $O(\cdot)$ notation, and output the union $H$ of all sub-graphs constructed by the $M$ times.  With high probability, all groups are connected to $r$ in $H$.  $\E[\exp(\ln (1+\frac1{2D}) \cdot \frac{d_H(v)}{\db_v})] \leq (1 + \frac1D)^M = n^{O(1)}$.  Using Markov inequality, we have $\exp(\ln (1 + \frac1{2D})\cdot \frac{d_h(v)}{\db_v}) \leq n^{O(1)}$ for every $v \in V$ with high probability. That is, $d_h(v) \leq O(\db_v \log n \cdot D) = O(\log^2n )\db_v$ with high probability.  Similarly, with high probability, we have $c(H) \leq O(\log ^2n)C^*$.

We then analyze the running time of the algorithm. The key parameter deciding the running time is $\Delta$, the maximum size of a label set $L_b$.  As we assumed $F_b$ is a forest over $X_b$ and $|X_b| \leq O(\tw)$, there are $\tw^{O(\tw)}$ different possibilities for $F_b$.  There are also $\tw^{O(\tw)}$ possibilities for each of $\Pi^\downarrow_b$ and $\Pi^\uparrow_b$. So, $|L_b| \leq \tw^{O(\tw)}$ for every $b \in B$. Therefore,  the running time of the algorithm is $\poly(n) \cdot \Delta^{O(D)} = \poly(n) \cdot (\tw^{O(\tw)})^{O(\log n)} = n^{O(\tw \log \tw)}$. This finishes the proof of Theorem~\ref{thm:GST-bd-treewidth}.

\bibliographystyle{plainurl}
\bibliography{refs}

\begin{thebibliography}{10}

\bibitem{azar1}
B.~Awerbuch, Y.~Azar, E.~Grove, M.~Kao, P.~Krishnan, and J.~Vitter.
\newblock Load balancing in the $l_p$ norm.
\newblock In {\em FOCS}, pages 383--391, 1995.

\bibitem{azar2}
Y.~Azar and S.~Taub.
\newblock All-norm approximation for scheduling on identical machines.
\newblock In T.~Hagerup and J.~Katajainen, editors, {\em SWAT}, volume 3111,
  pages 298--310, 2004.

\bibitem{Bod88}
Hans~L. Bodlaender.
\newblock Nc-algorithms for graphs with small treewidth.
\newblock In Jan van Leeuwen, editor, {\em Graph-Theoretic Concepts in Computer
  Science, 14th International Workshop, {WG} '88, Amsterdam, The Netherlands,
  June 15-17, 1988, Proceedings}, volume 344 of {\em Lecture Notes in Computer
  Science}, pages 1--10. Springer, 1988.
\newblock URL: \url{https://doi.org/10.1007/3-540-50728-0\_32}, \href
  {http://dx.doi.org/10.1007/3-540-50728-0\_32}
  {\path{doi:10.1007/3-540-50728-0\_32}}.

\bibitem{DV}
P.~Chalermsook, S.~Das, B.~Laekhanukit, and D.~Vaz.
\newblock Beyond metric embedding: Approximating group steiner trees on bounded
  treewidth graphs.
\newblock In {\em SODA}, pages 737--751, 2017.

\bibitem{CDR19-approx}
Eden Chlamt{\'a}c, Michael Dinitz, and Thomas Robinson.
\newblock {Approximating the Norms of Graph Spanners}.
\newblock In Dimitris Achlioptas and L{\'a}szl{\'o}~A. V{\'e}gh, editors, {\em
  Approximation, Randomization, and Combinatorial Optimization. Algorithms and
  Techniques (APPROX/RANDOM 2019)}, volume 145 of {\em Leibniz International
  Proceedings in Informatics (LIPIcs)}, pages 11:1--11:22, Dagstuhl, Germany,
  2019. Schloss Dagstuhl--Leibniz-Zentrum fuer Informatik.
\newblock URL: \url{http://drops.dagstuhl.de/opus/volltexte/2019/11226}, \href
  {http://dx.doi.org/10.4230/LIPIcs.APPROX-RANDOM.2019.11}
  {\path{doi:10.4230/LIPIcs.APPROX-RANDOM.2019.11}}.

\bibitem{CDR19}
Eden Chlamt{\'a}c, Michael Dinitz, and Thomas Robinson.
\newblock {The Norms of Graph Spanners}.
\newblock In Christel Baier, Ioannis Chatzigiannakis, Paola Flocchini, and
  Stefano Leonardi, editors, {\em 46th International Colloquium on Automata,
  Languages, and Programming (ICALP 2019)}, volume 132 of {\em Leibniz
  International Proceedings in Informatics (LIPIcs)}, pages 40:1--40:15,
  Dagstuhl, Germany, 2019. Schloss Dagstuhl--Leibniz-Zentrum fuer Informatik.
\newblock URL: \url{http://drops.dagstuhl.de/opus/volltexte/2019/10616}, \href
  {http://dx.doi.org/10.4230/LIPIcs.ICALP.2019.40}
  {\path{doi:10.4230/LIPIcs.ICALP.2019.40}}.

\bibitem{irit7}
Irit Dinur and David Steurer.
\newblock Analytical approach to parallel repetition.
\newblock In {\em STOC}, pages 624--633, 2014.

\bibitem{FR94}
M.~Furer and B.~Raghavachari.
\newblock Approximating the minimum-degree steiner tree to within one of
  optimal.
\newblock {\em Journal of Algorithms}, 17(3):409 -- 423, 1994.

\bibitem{GKR1}
N.~Garg, G.~Konjevod, and R.~Ravi.
\newblock A polylogarithmic approximation algorithm for the group {Steiner}
  tree problem.
\newblock {\em J. Algorithms}, 37(1):66--84, 2000.

\bibitem{anu1}
D.~Golovin, A.~Gupta, A.~Kumar, and K.~Tangwongsan.
\newblock All-norms and all-l{\_}p-norms approximation algorithms.
\newblock In {\em IACS}, volume~2 of {\em LIPIcs}, pages 199--210, 2008.

\bibitem{GKL22}
X.~Guo, G.~Kortsarz, B.~Laekhanukit, S.~Li, D.~Vaz, and J.~Xian.
\newblock On approximating degree-bounded network design problems.
\newblock {\em Algorithmica}, 84(5):1252--1278, 2022.

\bibitem{MH}
Mohammad~Taghi Hajiaghayi.
\newblock A list of open problems in bounded degree network design.
\newblock {\em The 8'th Workshop on Flexible Network Design}, 2016.

\bibitem{eran1}
E.~Halperin and R.~Krauthgamer.
\newblock Polylogarithmic inapproximability.
\newblock In {\em STOC}, pages 585--594, 2003.

\bibitem{IL23}
Sungjin Im and Shi Li.
\newblock Improved approximations for unrelated machine scheduling.
\newblock In {\em Proceedings of the 2023 Annual ACM-SIAM Symposium on Discrete
  Algorithms (SODA)}, pages 2917--2946, 2023.
\newblock URL:
  \url{https://epubs.siam.org/doi/abs/10.1137/1.9781611977554.ch111}, \href
  {http://arxiv.org/abs/https://epubs.siam.org/doi/pdf/10.1137/1.9781611977554.ch111}
  {\path{arXiv:https://epubs.siam.org/doi/pdf/10.1137/1.9781611977554.ch111}},
  \href {http://dx.doi.org/10.1137/1.9781611977554.ch111}
  {\path{doi:10.1137/1.9781611977554.ch111}}.

\bibitem{jain1}
K.~Jain.
\newblock A factor 2 approximation algorithm for the generalized steiner
  network problem.
\newblock {\em Combinatorica}, 21(1):39--60, 2001.

\bibitem{guy1}
Guy Kortsarz and Zeev Nutov.
\newblock The minimum degree group steiner problem.
\newblock {\em Discret. Appl. Math.}, 309:229--239, 2022.

\bibitem{arvind1}
V.~Kumar, M.~Marathe, S.~Parthasarathy, and A.~Srinivasan.
\newblock A unified approach to scheduling on unrelated parallel machines.
\newblock {\em J. ACM}, 56(5):28:1--28:31, 2009.

\bibitem{ravi1}
C.~Lau L, R.~Ravi, and M.~Singh.
\newblock {\em Iterative Methods in Combinatorial Optimization}.
\newblock Cambridge University Press, 2011.

\bibitem{LNS09}
L.~C. Lau, J.~Naor, M.~R. Salavatipour, and M.~Singh.
\newblock Survivable network design with degree or order constraints.
\newblock {\em SIAM J. Comput.}, 39(3):1062--1087, 2009.

\bibitem{vlsi1}
Manmeet~Kaur Nisha~Sharma.
\newblock A survey of vlsi techniques for power optimization and estimation of
  optimization.
\newblock {\em International Journal of Emerging Technology and Advanced
  Engineering}, 4, 2014.

\bibitem{SL15}
Mohit Singh and Lap~Chi Lau.
\newblock Approximating minimum bounded degree spanning trees to within one of
  optimal.
\newblock {\em J. ACM}, 62(1), mar 2015.
\newblock URL: \url{https://doi.org/10.1145/2629366}, \href
  {http://dx.doi.org/10.1145/2629366} {\path{doi:10.1145/2629366}}.

\bibitem{china}
Y.~Wang, X.~Hong, T.~Jing, Y.~Yang, X.~Hu, and Guiying Yan.
\newblock An efficient low-degree {RMST} algorithm for {VLSI/ULSI} physical
  design.
\newblock In {\em PATMOS}, pages 442--452, 2004.

\end{thebibliography}

\end{document}